\newtheorem{definition}{Definition}
\newtheorem{theorem}{Theorem}
\newtheorem{corollary}{Corollary}
\newtheorem{lemma}{Lemma}
\newtheorem{proposition}{Proposition}
\newcommand{\calM}{\mathcal{M}}
\newcommand{\calS}{\mathcal{S}}
\newcommand{\calR}{\mathcal{R}}
\newcommand{\calA}{\mathcal{A}}
\newcommand{\calB}{\mathcal{B}}
\newcommand{\bin}{\{0,1\}}
\newcommand{\abort}{\mathsf{abort}}
\newcommand{\detect}{\mathsf{detect}}
\newcommand{\suc}{\mathsf{suc}}
\newcommand{\guess}{\mathsf{guess}}
\newcommand{\out}{\mathsf{out}}
\newcommand{\F}{\mathbb{F}}
\newcommand{\Share}{\mathsf{Share}}
\newcommand{\Reconst}{\mathsf{Reconst}}
\newcommand{\ignore}[1]{}
\mathchardef\mhyphen="2D
\newcommand{\E}{\mathbb{E}}
\newcommand{\games}{\mathsf{Game}^\mathsf{sngl}}
\newcommand{\gamem}{\mathsf{Game}^\mathsf{mult}}
\newcounter{protnum}
\newcommand{\prot}[1][]{\refstepcounter{protnum}Protocol~#1\arabic{protnum}}
\begin{document}

\title{Perfectly Secure Message Transmission against Rational Adversaries\footnote{This is the full version of~\cite{FYK18} and~\cite{YK19}.}}

\author{Maiki Fujita\thanks{Saitama University, Japan.}
  \and Takeshi Koshiba\thanks{Waseda University, Japan. Email: \texttt{tkoshiba@waseda.jp}}
 \and Kenji Yasunaga\thanks{Tokyo Institute of Technology, Japan. Email: \texttt{yasunaga@c.titech.ac.jp}}}

\date{\today}

\maketitle

\begin{abstract}
  Secure Message Transmission (SMT) is a two-party cryptographic protocol
  by which the sender can securely and reliably transmit messages to the receiver using multiple channels.
  An adversary can corrupt a subset of the channels and commit eavesdropping and tampering attacks over the channels.
  In this work, we introduce a game-theoretic security model for SMT in which adversaries have some preferences for protocol execution.
  We define rational ``timid'' adversaries  who prefer to violate security requirements but do not prefer the tampering to be detected.

  First, we consider the basic setting where a single adversary attacks the protocol.
  We construct perfect SMT protocols against any rational adversary corrupting all but one of the channels.
  Since minority corruption is required in the traditional setting, 
  our results demonstrate a way of circumventing the cryptographic impossibility results by a game-theoretic approach.
  
  Next, we study the setting in which all the channels can be corrupted by multiple adversaries who do not cooperate.
  Since we cannot hope for any security if a single adversary corrupts all the channels or multiple adversaries cooperate maliciously,
  the scenario can arise from a game-theoretic model.
  We also study  the scenario in which both malicious and rational adversaries exist.
 \end{abstract}

\section{Introduction}

It is common to use the information network to send and receive messages.
The physical channels between senders and receivers are composed of communication apparatuses,
allowing adversaries to eavesdrop or tamper.
While we usually use public-key cryptosystems to protect data over communication,
their security needs computational assumptions.
It is desirable to develop methods of achieving security in the information-theoretic sense.

In the two-party cryptographic setting,
we typically assume a single communication channel between the parties.
However, current network technologies can let many channels be available.
Secure Message Transmission (SMT),  introduced by Dolev et al.~\cite{DDWY93},
is a cryptographic protocol for securely transmitting messages through multiple channels.
Even if an adversary corrupts $t$ out of $n$ channels and commits
eavesdropping and tampering over the corrupted channels,
the messages are securely and correctly transmitted to the receiver by SMT.
The requirements for SMT consist of \emph{privacy} and \emph{reliability}.
The privacy guarantees that the adversary can obtain no information about the transmitted message,
and the reliability guarantees that the receiver recovers the message sent by the sender.
If an SMT protocol satisfies both the requirements perfectly, the protocol is called a \emph{perfect} SMT.
Spini and Z\'emor~\cite{SZ16} gave the most round-efficient perfect SMT.
Dolev et al.~\cite{DDWY93} showed that any one-round perfect SMT must 
satisfy $t < n/3$ and that any perfect SMT whose round complexity is at least two
must satisfy $t < n/2$.
Garay and Ostrovsky~\cite{GO08} introduced the model of SMT \emph{by Public Discussion} (SMT-PD),
which allows transmission over an authentic and reliable  public channel in addition to the $n$ channels.
Shi et al.~\cite{SJST11} further studied SMT-PD and constructed a round-optimal perfect SMT-PD.
In the context of \emph{network coding}, similar but more general problems have been studied,
and some schemes~\cite{YSJL14} can be seen as SMT protocols.

In the standard setting of cryptography, we assume the participants are either honest or malicious.
The former will follow the protocol description honestly,
and the latter may deviate from the protocol maliciously.
In general, malicious behavior may be illegal and involve some risks,
implying that adversaries in the standard cryptographic setting behave maliciously
regardless of their risk.
However, adversaries in real life may decide their behavior by taking the risk into account.
To capture such situations, we incorporate game-theoretic \emph{rational} participants into cryptography.
Halpern and Teague~\cite{HT04} first studied the rational behavior of participants for secret sharing.
Since then, rational secret sharing has been intensively studied~\cite{ADGH06,GK06,KN08a,KN08b,AL11,FKN10,KOTY17}.
Moreover, there have been many studies using game-theoretic analysis of cryptographic primitives/protocols,
including two-party computation~\cite{ACH16,GK12}, leader election~\cite{Gra10,ADH19},
Byzantine agreement~\cite{GKTZ12}, consensus~\cite{HV16}, public-key encryption~\cite{Yas16,YY18},
delegation of computation~\cite{AM13,GHRV14,CG15,GHRV16,CG17,IY17}, and protocol design~\cite{GKMTZ13,GKTZ15}.
Among them, several works~\cite{GKTZ12,AM13,GHRV14,GHRV16,GKMTZ13} used the rationality of adversaries
to circumvent the impossibility results.

Groce et al.~\cite{GKTZ12} studied the Byzantine agreement problem in the presence of a rational adversary.
They showed that given some knowledge of the adversary's preference,
a perfectly secure Byzantine agreement is possible for $t$ corruptions among $n$ players for any $t < n$.
The security against $t \geq n/2$ corruptions is impossible in the standard adversary model.

This work shows that the impossibility results of SMT can also be circumvented by considering adversaries' rationality.
As in the Byzantine agreement, we introduce a rational adversary for SMT who has some preference for the protocol execution outcome.
More specifically, we define \emph{timid} adversaries who prefer to violate the requirements of SMT but do not prefer the tampering  to be detected.
Such preferences can be justified if adversaries fear losing their corrupted channels when the protocol detects tampering.

\subsection{Our Results}

First, as the most basic setting, we define the security of perfect SMT in the presence of a \emph{single} rational adversary.
It is a natural extension of the standard cryptographic setting to the rational one.
We show that the almost-reliable SMT-PD protocol of~\cite{SJST11} works as a \emph{perfect} SMT protocol.
An intuitive reason is that timid adversaries do not have an incentive to attack the channels for fear of detection.
Thus perfect reliability follows as well as perfect secrecy.
To construct non-interactive SMT protocols, we consider \emph{strictly} timid adversaries
who prefer being undetected to violate the security requirements.
We show that \emph{robust} secret sharing schemes, which can detect forgery of shares with high probability,
can be used as a non-interactive SMT protocol for such  adversaries.
Both protocols are perfectly secure against timid adversaries corrupting $t$ out of $n$ channels for any $t < n$,
which is impossible in the standard setting of SMT protocols.
We also present the impossibility  of constructing SMT protocols
against general timid adversaries corrupting $t \geq n/2$ channels.
The result implies that the public discussion model is necessary for the first protocol, and the strict timidness is necessary for the second protocol.
The results are summarized in Table~\ref{tb:results}.

\begin{table}[t]
  \centering
  \caption{Summary of the Results of Single-Adversary Setting}\label{tb:results}
  \medskip
   \scalebox{0.90}{
  \begin{tabular}{cccccc}\hline
    \makebox[15mm]{\textbf{Adversary}} & \textbf{PD}$^*$ & \makebox[15mm]{\textbf{Resiliency}} & \textbf{Security} & \textbf{\#\! Round} & \makebox[35mm]{\textbf{Construction}} \\ \hline
    Malicious & --- & $t < n/3$ & Perfect & $1$ & Exist (\cite{DDWY93}) \\
    Malicious & --- & $t \geq n/3$ & Perfect & $1$ & Impossible (\cite{DDWY93})\\
    Malicious & --- & $t < n/2$ & Perfect & $2$ &Exist (\cite{KS09,SZ16})\\
    Malicious & --- & $t < n/2$ & Almost Reliable  & $1$ & Exist (\cite{YSJL14}\footnotemark)\\
    Timid & --- & $t < n/2$ & Perfect & $1$ & Exist (Corollary~\ref{cor:ciss})\\
    Malicious & --- & $t \geq n/2$ & Perfect & $2$ &Impossible (\cite{DDWY93})\\
    Malicious & \checkmark & $t \geq n/2$ & Almost Reliable & $2$ & Impossible (\cite{SJST11})\\
        Timid & --- & $t \geq n/2$ & Perfect & --- & Impossible (Corollary~\ref{cor:norsmt})\\ 
    Malicious & \checkmark & $t < n$ & Almost Reliable & $3$ & Exist (\cite{FW00,GO08,SJST11,GGO14})\\
    Timid & \checkmark & $t < n$ & Perfect & $3$ & Exist (Theorem~\ref{thm:sjst_single}) \\
    Strictly Timid & --- & $t < n$ & Perfect & $1$& Exist (Theorem~\ref{thm:sttimid}) \\\hline
    & & & & & \\
    \multicolumn{5}{l}{\raisebox{2.5ex}[0pt]{$^*$PD represents the public discussion model.}}
  \end{tabular}
  }
\end{table}

\footnotetext{The paper studied more general problems of secure network coding. By considering a simple $n$-link network as in SMT and
  assuming the adversary who eavesdrops and tampers with the same links, the coding scheme of~\cite{YSJL14} gives a construction of
an almost-reliable SMT protocol for $t < n/2$.}

Next, we study the setting in which \emph{multiple} timid adversaries may corrupt \emph{all} the channels.
More specifically, we assume that at least two adversaries exclusively corrupt subsets of the channels
and prefer to violate the security requirements without being detected.
We also assume that each adversary prefers other adversaries' tampering to be detected.
This additional assumption makes rational adversaries avoid cooperating. 
If a single adversary corrupts all the channels, we cannot hope for any security of SMT.
Thus, multiple conflicting adversaries are necessary for achieving security.
We believe that the multiple-adversary setting is more realistic than the single-adversary one
since it is difficult for the adversary to confirm that no other adversary exists.
Also, protocols in the multiple setting need to equip the property to declare the tampering detection with channel identifiers.
This required property is more desirable than a detection mechanism without channel identifiers, sufficient  in the single setting.
We show that secure SMT protocols exist 
even if such rational adversaries corrupt all the channels.
The SMT-PD protocol of~\cite{SJST11} also works in this setting as perfect SMT-PD.
To construct perfect SMT protocols without public discussion,
we employ the idea of \emph{cheater-identifiable} secret sharing (CISS),
in which every player who submits a forged share in the reconstruction phase can be identified.
We construct a non-interactive SMT protocol based on the CISS of Hayashi and Koshiba~\cite{HK18}.
Technically, our construction employs pairwise independent (a.k.a. strongly universal) hash functions instead of universal hash functions in~\cite{HK18}. 
Since the security requirements of CISS are not sufficient for proving the security of SMT against timid adversaries,
we provide the security analysis of our protocol, not for general CISS-based SMT protocols.
  The limitation of CISS is that the number of forged shares should be a minority.
  Namely, the above construction only works for adversaries who corrupt less than $n/2$ channels. 
  We show that 
  a variant of our CISS-based protocol works as a perfect SMT protocol for \emph{strictly} timid adversaries,
  even if each adversary corrupts a majority of the channels.

  Finally, we consider the setting in which a malicious adversary exists as well as rational adversaries.
  Namely, there are heterogeneous adversaries, all but one behave rationally, but one acts maliciously.
  We believe this setting is preferable because the assumption that all adversaries are rational may not be realistic.
  We show that a modification of the CISS-based protocol achieves a non-interactive perfect SMT protocol against such adversaries.
  The protocol is secure as long as a malicious adversary corrupts $t^* \leq \lfloor (n-1)/3 \rfloor$ channels,
  and each rational adversary corrupts at most $\min\{\lfloor (n-1)/2 \rfloor - t^*, \lfloor (n-1)/3 \rfloor \}$ channels.

  \begin{table}[t]
  \centering
  \caption{Perfect SMT Protocols}\label{tb:perfect_results}
  \medskip
  \scalebox{0.99}{
    \begin{tabular}{ccccccc}\hline
    \multirow{2}{*}{\textbf{Adversary}} & \multirow{2}{*}{\makebox[5mm]{\textbf{PD}}} &\textbf{Total}  & \multirow{2}{*}{\textbf{\#\! Adv.}}& \textbf{Resiliency}  & \multirow{2}{*}{\textbf{\# \!Round}}  & \multirow{2}{*}{{\textbf{References}}}\\
    &  &  \textbf{Resiliency} &  & \textbf{per Adv.} &   &  \\\hline

 %    \textbf{Adversary} & \textbf{PD} &\textbf{Total Resiliency}  & \textbf{\#\! Adv.}& \textbf{Resiliency per Adv.}  & \textbf{\# \!Round}  & \textbf{References} \\ \hline
    Malicious & ---  & $< n/3$ & $1$ & $< n/3$ & $1$ & \cite{DDWY93} \\
    Malicious & ---  & $< n/2$ & $1$ & $< n/2$ & $2$ & \cite{KS09,SZ16} \\
    Timid & --- & $< n/2$ & $1$ & $< n/2$ & $1$ & Corollary~\ref{cor:ciss} \\
    Timid & \checkmark & $< n$  & $1$ &$< n$ &  $3$ & Theorem~\ref{thm:sjst_single} \\
    Strictly Timid & --- & $< n$ & $1$ & $< n$ & $1$ & Theorem~\ref{thm:sttimid}\\
    Timid\slash Malicious & --- & $n$& $\geq 2$ &$< n/6$  & $1$ & Theorem~\ref{thm:ss-robust} \\ 
    Timid & ---  & $n$ & $\geq 2$ &$< n/2$ &  $1$ & Theorem~\ref{thm:ciss} \\
    Timid & \checkmark  & $n$ &  $\geq 2$& $< n$ &  $3$ & Theorem~\ref{thm:sjst} \\
    Strictly Timid & ---  & $n$ & $\geq 2$ &$< n$ & $1$ & Theorem~\ref{thm:ciss2} \\\hline
    \end{tabular}
}    
  \end{table}

  We summarize constructions of perfect SMT protocols both for the single-adversary and the multiple-adversary settings in Table~\ref{tb:perfect_results}.
  The total resiliency is the maximum number of corrupted channels for which the protocol can achieve security.
  
  We note that the single-adversary setting can be seen as a special case of the multiple-adversary setting.
  Some protocols for multiple adversaries may work against single adversaries.
  We show that Theorem~\ref{thm:ciss} implies a non-interactive protocol for single adversaries
  that is secure against $t < n/2$ corruption (Corollary~\ref{cor:ciss}).
  This result circumvents the impossibility of constructing one-round protocols for $t \geq n/3$ in~\cite{DDWY93}
  by a game-theoretic consideration.

\subsection{Related Work}

The adversaries' behavior of avoiding detection has been used in the literature of multiparty computation.
Franklin and Yung~\cite{FY92} defined the notion of \emph{$t$-detectability},
which guarantees that no coalition of $t$ parties can either learn any information about other $n-t$ parties' inputs
or prevent the honest parties from detecting the tampering.
Aumann and Lindell~\cite{AL10} introduced the notion of security against \emph{covert adversaries},
who attempt to cheat but do not want to be caught with some prescribed probability.
The underlying idea of covert adversaries is similar to that of timid adversaries in this work.
However, there are several key differences.
First, the goal of security notions in~\cite{FY92,AL10} is to detect the adversary's tampering.
These works do not consider what happens if tampering is detected.
We provide a game-theoretic framework that guarantees perfect secrecy and reliability against adversaries trying to avoid detection.
Second, adversaries in~\cite{FY92,AL10} only try to learn private inputs of honest parties,
while timid adversaries in this work try to violate both reliability and secrecy.
In particular, since all protocols in this work achieve perfect secrecy,
timid adversaries are essentially concerned about how to violate the reliability of the protocols.
Such adversaries were not considered in~\cite{FY92,AL10}.
Also, as far as we know, security against covert adversaries can only be achieved against computationally bounded adversaries.
We construct perfect SMT protocols against computationally unbounded adversaries.

Another efficiency metric discussed in the SMT literature is the \emph{communication complexity} of the protocols~\cite{SNR04,ACdH06}.
Minimizing the communication complexities of our protocols  is an interesting future work.

\subsection{Organization}

Section~\ref{sec:smt} describe the definitions and the known results of secure message transmission.
The security of SMT against a single adversary is given in Section~\ref{sec:singlegame},
and our protocols are presented in Section~\ref{sec:singleprotocol}.
In Section~\ref{sec:impossibility}, we show an impossibility result for general timid adversaries.
We define the security of SMT  against multiple adversaries in Section~\ref{sec:multgame}
and give the constructions of SMT protocols in Section~\ref{sec:multadv}.
We study a mixed model of rational and malicious adversaries in Section~\ref{sec:mixed}.
We conclude the paper in Section~\ref{sec:conclusion}.

\section{Secure Message Transmission}\label{sec:smt}

We assume that there are $n$ channels between a sender $\calS$ and a receiver $\calR$.
SMT protocols proceed in \emph{rounds}.
In each round, either $\calS$ or $\calR$  can send messages over the channels. 
The messages are delivered before the next round starts.
The adversary $\calA$ can corrupt at most $t$ channels out of the $n$ channels; such an adversary is referred to as \emph{$t$-adversary}.
On the corrupted channels, $\calA$ can eavesdrop, block communication, or place any messages on them.
We assume that $\calA$ is \emph{rushing}. Namely, 
$\calA$ can decide the actions on the corrupted channels after observing the information sent on the corrupted channels.
We consider computationally \emph{unbounded} $\calA$. 

Let $\calM$ be the message space.
In SMT, $\calS$ tries to transmit a message in $\calM$ to $\calR$, 
and $\calR$ outputs the received message after the protocol execution.
For an SMT protocol $\Pi$, let $M_S$ denote the random variable of the message sent by $\calS$ and
$M_R$ the message output by $\calR$.
An execution of $\Pi$ can be completely characterized by the random coins of all the parties, namely, $\calS$, $\calR$, and $\calA$,
and the message $M_S$.
Let $V_A(m, r_A)$ be the \emph{view} of $\calA$ when $M_S = m$, and $\calA$ uses $r_A$ as the random coins.
Precisely, $V_A(m, r_A)$ consists of the messages sent over the corrupted channels when the protocol is run with $M_S = m$,
and $r_A$, the random coins of $\calA$.

We formally define the security requirements of SMT protocols.
\begin{definition}
  A protocol between $\calS$ and $\calR$ is \emph{$(\varepsilon, \delta)$-Secure Message Transmission (SMT)
  against $t$-adversary}
  if the following three conditions are satisfied against any $t$-adversary $\calA$:
  \begin{itemize}
  \item \emph{Correctness}: For any $m \in \calM$,
    if $M_S = m$ and $\calA$ does not change messages sent over the corrupted channels, then $\Pr[ M_R = m ] = 1$.
  \item \emph{Privacy}: For any $m_0, m_1 \in \calM$ and $r_A \in \bin^*$, it holds that
    \begin{equation*}
      \Delta( V_A(m_0, r_A), V_A(m_1, r_A) ) \leq \varepsilon,
    \end{equation*}
    where $\Delta(X, Y)$ denotes the statistical distance between two random variables $X$ and $Y$ over a finite set $\Omega$,
    which is defined by
    \begin{equation*}
      \Delta(X, Y) = \frac{1}{2} \sum_{u \in \Omega} \left| \Pr[X = u] - \Pr[Y = u] \right|.
    \end{equation*}

  \item \emph{Reliability}: For any message $m \in \calM$, when $M_S = m$,
    \begin{equation*}
      \Pr[ M_R \neq m ] \leq \delta,
    \end{equation*}
          where the probability is taken over the random coins of $\calS$, $\calR$, and $\calA$.
  \end{itemize}
\end{definition}

A protocol achieving $(0,0)$-SMT is called \emph{perfect}. 
If a protocol achieves $(0,\delta)$-SMT for small $\delta$, it is called \emph{almost-reliable} SMT.

Dolev~et~al.~\cite{DDWY93} characterized the trade-off between the achievability and the round complexity of perfect SMT.
\begin{theorem}[\cite{DDWY93}]\label{thm:ddwy}
  One-round perfect SMT protocols against $t$-adversary exist if and only if $t < n/2$.
  Also, multi-round perfect SMT protocols against $t$-adversary exist if and only if $t < n/3$.
\end{theorem}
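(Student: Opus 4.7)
The plan is to prove both halves of the iff as usual: each of the two threshold claims decomposes into a protocol construction (sufficiency) and an impossibility argument (necessity), giving four sub-claims to verify. I would handle the one-round claim first, since its necessity argument is essentially combinatorial, and then use the intuition developed there to attack the multi-round claim.

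For the one-round result, sufficiency requires exhibiting, given $t$ with $t<n/2$, a non-interactive scheme that simultaneously achieves perfect privacy and perfect reliability. The natural candidate is a scheme built from an MDS (Reed--Solomon-style) code composed with a secret sharing layer: $\calS$ encodes the message into $n$ symbols, one per channel, tuned so that any $n-t$ honest symbols fully determine the secret while any $t$ symbols reveal no information about it; $\calR$ decodes by running an error-correcting procedure that tolerates up to $t$ symbol modifications from $\calA$. The necessity side requires that if $t\geq n/2$ then no one-round perfect SMT can exist, and I would argue this by contradiction using the standard indistinguishability trick: partition the $n$ channels into two overlapping blocks of size $t$ and construct two $t$-adversarial strategies whose receiver views coincide with the views from honest runs on two distinct sender messages, forcing either privacy or reliability to fail.

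For the multi-round result, the structure is analogous but each side is more delicate. Sufficiency, under $t<n/3$, follows by building an interactive protocol in which $\calS$ and $\calR$ exchange authentication tags and consistency checks across rounds; the $n-t$ honest channels give enough redundancy that, by correlating the cross-round responses, $\calR$ can always pin down the sender's message with zero error. Necessity, for $t\geq n/3$, is the most technically involved sub-claim: the argument must show that no amount of back-and-forth communication can close the gap, even against rushing, adaptive adversaries. I would formalize this through a hybrid indistinguishability argument across transcripts, producing two admissible executions in which the joint view on every set of $n-t$ honest channels is identical but the ``correct'' receiver output differs, contradicting perfect reliability combined with perfect privacy.

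I expect the main obstacle to be the necessity side of the multi-round statement. In the one-round case the bad event for $\calA$ can be analyzed essentially by counting arguments on a single transmitted codeword; in the multi-round case one must rule out \emph{every} adaptive rushing strategy over arbitrarily many rounds, which requires tracking how information leaked in early rounds constrains the adversary's later tampering. The cleanest approach is probably an inductive simulation lemma showing that, whenever $t\geq n/3$, any protocol transcript on the honest channels can be produced by two different sender messages under matched adversarial strategies; formalizing this inductive step, and making sure the simulation respects the rushing order in every round, is where the real work lies.
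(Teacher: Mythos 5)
There is a genuine problem here, and it begins before your proof does: the theorem as printed has the two thresholds transposed. The result of \cite{DDWY93} --- restated correctly in this paper's own Introduction and in Table~\ref{tb:results} --- is that \emph{one-round} perfect SMT exists iff $t < n/3$ and \emph{multi-round} perfect SMT exists iff $t < n/2$. Your proposal faithfully tries to prove the transposed version, and both of the hard halves you would need are unprovable. First, your one-round sufficiency argument for $t < n/2$ cannot work: to get perfect privacy against $t$ observed channels, your MDS-plus-secret-sharing encoding must have dimension at least $t+1$, so its minimum distance is at most $n-t$ and it uniquely corrects at most $\lfloor (n-t-1)/2\rfloor$ errors; this is at least $t$ only when $n \geq 3t+1$. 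For $n/3 \leq t < n/2$ no one-round scheme exists at all (that is precisely the DDWY impossibility), so no choice of code rescues the construction. Second, your multi-round necessity claim for $t \geq n/3$ is false: two-round perfect SMT protocols exist for every $t < n/2$ (the paper cites \cite{KS09} and \cite{SZ16} in Table~\ref{tb:results}), so the ``inductive simulation lemma'' you hope to establish --- that whenever $t \geq n/3$ any honest-channel transcript is producible from two different messages --- has explicit counterexamples and cannot be proven.

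If one corrects the statement, your ingredients are essentially the standard ones but attached to the wrong halves. The Reed--Solomon construction is exactly the right sufficiency argument for \emph{one-round} SMT at $t < n/3$ (privacy from the degree-$t$ sharing, reliability from correcting $t$ errors in a distance-$(n-t)$ code with $n-t > 2t$), and your overlapping-blocks indistinguishability argument is the right impossibility argument at $t \geq n/2$, which applies to protocols with \emph{any} number of rounds and hence yields the necessity half of the multi-round claim. What your proposal does not supply, even after the correction, are the two remaining pieces: a multi-round construction for $n/3 \leq t < n/2$ (which genuinely requires interaction, e.g.\ the two-round protocols of \cite{KS09,SZ16}) and the one-round impossibility for $n/3 \leq t < n/2$, which is not the two-block partition argument but a finer argument exploiting that a single transmission gives $\calR$ no way to disambiguate $t$ corrupted shares when $n < 3t+1$. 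Note finally that the paper itself offers no proof of this theorem --- it is imported from \cite{DDWY93} --- so the only internal check available is consistency with the Introduction and Table~\ref{tb:results}, both of which contradict the theorem as printed.
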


\paragraph*{SMT by Public Discussion.}
In addition to the $n$ channels, we may assume that 
$\calS$ and $\calR$ can use an authentic and reliable \emph{public channel}
on which messages are publicly accessible and guaranteed to be correctly delivered.
Such protocols are referred to as SMT \emph{by Public Discussion} (SMT-PD).
Franklin and Wright~\cite{FW00} gave an impossibility result of SMT-PD by using different terminology.
(See~\cite{GGO11} for this fact.)
\begin{theorem}[\cite{FW00}]\label{thm:fw}
  Perfectly-reliable ($\delta=0$) SMT-PD protocols against $t$-adversary exist only if $t < n/2$.
\end{theorem}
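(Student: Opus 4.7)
The plan is to establish the bound by a two-worlds argument tailored to the authentic public channel. Suppose for contradiction that there exists a perfectly-reliable SMT-PD protocol $\Pi$ against a $t$-adversary with $t \geq n/2$ and non-trivial privacy ($\varepsilon < 1$). Since $n \leq 2t$, I would partition the private channels into disjoint sets $A$ and $B$ with $|A|, |B| \leq t$, and fix two distinct messages $m_0, m_1 \in \calM$. I would then consider two adversaries in parallel: $\calA_A$ corrupts $A$ and internally simulates an honest run of $\Pi$ on input $m_1$, placing on each channel of $A$ whatever the simulated sender and simulated receiver would exchange there; symmetrically, $\calA_B$ corrupts $B$ and simulates an honest run on input $m_0$ on the channels of $B$. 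The intuition is that $\calR$'s view, composed of the $A$-transcript, the $B$-transcript, and the public-channel transcript, would look the same in the two worlds up to whatever the authentic public channel forces to differ.

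The crux is a case analysis on the public-channel transcript. If the distributions of public messages under $M_S = m_0$ and $M_S = m_1$ have nonzero overlap, I would argue that with positive probability one can select random coins so that the real public transcript in the $m_0$-execution coincides round-by-round with a possible public transcript in the $m_1$-execution; conditioned on that coincidence, $\calR$'s view in the world $(\calA_A, M_S = m_0)$ is identically distributed to its view in the world $(\calA_B, M_S = m_1)$, yet perfect reliability forces $\calR$ to output $m_0$ in the former and $m_1$ in the latter, a contradiction. If instead the two public-transcript distributions have disjoint supports, then an adversary who merely eavesdrops on the public channel decides $M_S$ with certainty, driving $\Delta(V_A(m_0, r_A), V_A(m_1, r_A))$ to $1$ and contradicting $\varepsilon < 1$.

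The main obstacle is making this coupling work for multi-round protocols. For a one-round SMT-PD, the support-overlap hypothesis directly supplies a pair of random tapes for the sender and the simulator that align the public transcripts in a single shot. In the interactive case, however, the public messages in later rounds depend on prior private-channel exchanges that differ between the two worlds, so I would proceed by induction on the round index: once the round-$i$ public transcripts coincide with positive probability, the simulator's round-$(i+1)$ coins can be chosen conditionally to keep the transcripts coinciding, provided each round's public-message support has the analogous overlap. Once the coupling is in place, the two receiver views are identically distributed, and perfect reliability then forces $m_0 = m_1$, completing the contradiction.
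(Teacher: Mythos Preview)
The paper does not prove this theorem; it is quoted from Franklin and Wright~\cite{FW00} (with a pointer to~\cite{GGO11}) and stated without proof, so there is no ``paper's own proof'' to compare against. Your proposal is the standard two-worlds indistinguishability argument used for this kind of lower bound, and its overall shape is correct: partition the private channels into two halves each of size at most $t$, have each adversary simulate the other message on its half, and argue that the receiver cannot tell the two executions apart while perfect reliability forces different outputs. Your observation that some nontrivial privacy assumption (here $\varepsilon<1$) is needed is exactly right---without it the sender can simply broadcast $m$ on the public channel.

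The one place where your sketch is still soft is the multi-round coupling over the public channel. You phrase it as ``select random coins so that the real public transcript in the $m_0$-execution coincides round-by-round with a possible public transcript in the $m_1$-execution,'' but the adversary does not choose the honest parties' coins; the argument must instead show that the \emph{joint} distribution of the receiver's view has a common support point across the two worlds. Concretely, in round $i$ the sender's public message depends on the private-channel traffic already received (which the adversary has tampered with), so the inductive step is that, conditioned on identical public transcripts through round $i-1$, the round-$i$ public messages in the two worlds either have overlapping support (and the coupling continues) or are distinguishable (and privacy is violated by an eavesdropper who has observed only the public channel). This refinement is exactly what the original proofs carry out; once stated this way your induction goes through and the contradiction with $\delta=0$ follows.
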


Shi~et~al.~\cite{SJST11} gave several impossibility results of SMT-PD and constructed a round-optimal SMT-PD protocol.
We use their protocol. The description appears in Section~\ref{sec:single_sjst}.

\section{SMT against a Single Rational Adversary}\label{sec:singlegame}

We define our security model of SMT protocol against a single rational adversary.
The rationality of the adversary is characterized by a \emph{utility function}
that represents the preference of the adversary over possible outcomes of the protocol execution.

We can consider various preferences of the adversary regarding the SMT protocol execution.
The adversary may prefer to violate the privacy or the reliability of SMT protocols.
Also, the adversary may prefer to violate the above properties without being tampering detected. 
Here, we consider the adversary who prefers (1) to violate privacy, (2) to violate reliability, and (3) the tampering to be undetected.

To define the utility function, we specify the SMT game as follows.
\paragraph*{The SMT Game.}
For an SMT protocol $\Pi$, we define our SMT game $\games(\Pi, \calA)$ against a single adversary $\calA$.
First, set three parameters $\guess = \suc = \detect  = 0$.
For the message space $\calM$, choose $m \in \calM$ uniformly at random,
and run the protocol $\Pi$ in which the message to be sent is $M_S = m$.
In the protocol execution, as in the usual SMT, the adversary $\calA$ can corrupt at most $t$ channels
and tamper with any messages sent over the corrupted channels.
If the sender or the receiver sends a special message ``DETECT'' during the execution, set $\detect = 1$.
After running the protocol, the receiver outputs $M_R$, and the adversary outputs $M_A$.
If $M_R = M_S$, set $\suc = 1$.
If $M_A = M_S$, set $\guess = 1$.
The outcome of the game is $(\guess, \suc, \detect)$.

By following~\cite{FKN10}, we model the game where the adversary tries to guess the message
chosen uniformly at random.
In general, it is difficult to model the ``real'' game that the adversary attacks.
The above formulation can capture the situation that the adversary learns partial information of the message.
If the partial information increases the probability of correctly guessing the message, the adversary obtains higher utility in the game.

The utility of the adversary is defined as the expected utility in the SMT game.

\begin{definition}[Utility]
  The utility $u(\calA,U)$ of the adversary $\calA$ with utility function $U$
  is the expected value $\E[U(\out)]$, where $U$ is a function that maps the outcome $\out = (\guess, \suc, \detect)$ of the game $\games(\Pi, \calA)$ 
  to real values and the probability is taken over the randomness of the game.
\end{definition}

The utility function $U$ characterizes the type of adversaries.
If the adversary has the preferences (1)-(3) as above, the utility function may have the property  such that
for any two outcomes $\out = (\guess, \suc, \detect)$ and $\out' = (\guess', \suc', \detect')$ of the SMT game,
\begin{enumerate}
\item $U(\out) > U(\out')$ if $\guess > \guess'$, $\suc = \suc'$, and $\detect = \detect'$;
\item $U(\out) > U(\out')$ if $\guess = \guess'$, $\suc < \suc'$, and $\detect = \detect'$;
\item $U(\out) > U(\out')$ if $\guess = \guess'$, $\suc = \suc'$, and $\detect < \detect'$.
\end{enumerate}

Based on the utility function of the adversary, we define the security of SMT against rational adversaries.
In particular, regarding the security requirements, we only consider perfect SMT.
\begin{definition}[PSMT against a Rational Adversary]\label{def:rsmtsec}
  An SMT protocol $\Pi$ is \emph{perfectly secure against a rational $t$-adversary with utility function $U$} if
  there is a $t$-adversary $\calB$  such that 
  \begin{enumerate}
  \item \emph{Perfect security}: $\Pi$ is $(0,0)$-SMT against $\calB$; and
  \item \emph{Nash equilibrium}: $u(\calA,U) \leq u(\calB,U)$ for any $t$-adversary $\calA$. 
  \end{enumerate}
\end{definition}

The perfect security guarantees that an adversary $\calB$ is \emph{harmless}.
The Nash equilibrium guarantees that no adversary $\calA$ can gain more utility than $\calB$.
Thus, the above security implies that no adversary $\calA$ can gain more utility
than the harmless adversary.
Namely, the adversary does not have an incentive to deviate from the strategy of the harmless adversary $\calB$.

In the security proof of our protocol, we will consider an adversary $\calB$ 
who does not tamper with any messages on the channels and outputs a random message from $\calM$ as $M_A$.
We call such $\calB$ a \emph{random guessing} adversary.
If $\calB$ is random guessing, then the perfect security immediately follows from the correctness property of $\Pi$.

\subsection*{Timid Adversaries}

We construct secure protocols against \emph{timid} adversaries,
who prefer to violate the security requirements of SMT protocols and do not prefer the tampering  to be detected.
More formally, the utility function  of such adversaries  should have properties such that
\begin{enumerate}
\item $U(\out) > U(\out')$ if  $\guess = \guess'$, $\suc < \suc'$, and $\detect = \detect'$; and
\item $U(\out) > U(\out')$ if  $\guess = \guess'$, $\suc = \suc'$, and $\detect < \detect'$,
\end{enumerate}
where $\out = (\guess, \suc, \detect, \abort)$ and $\out' = (\guess', \suc', \detect', \abort')$ are the outcomes of the SMT game.
Let $U_\mathsf{timid}$ be the set of utility functions that satisfy the above conditions.

Also, timid adversaries may prefer being undetected to violating security.
Such adversaries have the following utility:
\begin{enumerate}
 \setcounter{enumi}{2}
\item $U(\out) > U(\out')$ if  $\guess = \guess'$, $\suc > \suc'$, and $\detect < \detect'$.
\end{enumerate}
Let $U_\mathsf{st\mhyphen timid}$ be the set of utility functions satisfying the above three conditions.
An adversary is  \emph{timid} if his utility function is in $U_\mathsf{timid}$,
and \emph{strictly timid} if the utility function is in $U_\mathsf{st\mhyphen timid}$.

In the analysis of our protocols, we need the following four values of utility:
\begin{itemize}
\item $u_1$ is the utility when $\Pr[\guess = 1] = \frac{1}{|\calM|}$,   $\suc = 0$, and $\detect = 0$;
\item $u_2$ is the utility when $\Pr[\guess = 1] = \frac{1}{|\calM|}$,   $\suc = 1$, and $\detect = 0$;
\item $u_3$ is the utility when $\Pr[\guess = 1] = \frac{1}{|\calM|}$,   $\suc = 0$, and $\detect = 1$;
\item $u_4$ is the utility when $\Pr[\guess = 1] = \frac{1}{|\calM|}$,   $\suc = 1$, and $\detect = 1$.
\end{itemize}
It follows from the properties of utility functions in $U_\mathsf{timid}$ that 
$u_1 > \max \{u_2, u_3\}$ and $\min\{u_2, u_3\} > u_4$.
For utility functions in $U_\mathsf{st\mhyphen timid}$, it holds that $u_1 > u_2 > u_3 > u_4$.

The fact that all the utilities we use in our analysis are $u_1, u_2, u_3, u_4$ implies that
all of  our protocols achieve \emph{perfect} privacy
since the probability of guessing the message is $1/|\calM|$ in every utility.

\section{Protocols against a Timid Adversary}\label{sec:singleprotocol}

\subsection{Protocol by Public Discussion}\label{sec:single_sjst}

We show that an almost-reliable SMT-PD protocol proposed by Shi, Jiang, Safavi-Naini, and Tuhin~\cite{SJST11}
works as a perfect SMT-PD protocol against a timid adversary.

First, we describe the protocol and its proof overview.
\paragraph{The SJST Protocol.}
The protocol is based on the simple protocol for \emph{static} adversaries
where the sender sends a random key $R_i$ over the $i$th channel for each $i \in \{1,\dots,n\}$,
and the encrypted message $c = m \oplus R_1 \oplus \dots \oplus R_n$ over the public channel.
Suppose that the adversary $\calA$ sees the messages sent over the corrupted channels and does not change them.
Since $\calA$ cannot see at least one key $R_j$ when corrupting less than $n$ channels,
the mask $R_1 \oplus \dots \oplus R_n$ for the encryption looks random for $\calA$.
Thus, the message $m$ can be securely encrypted and reliably sent through the public channel.
The SJST protocol employs a mechanism for detecting the adversary's tampering by using hash functions
to cope with \emph{active} adversaries, who may change messages sent over the corrupted channels.
Specifically, the \emph{pairwise independent} hash functions (see Appendix~\ref{sec:uh}) satisfy the following property:
when a pair of keys $(r_i, R_i)$ is changed to $(r_i',R_i') \neq (r_i, R_i)$,
the hash value for $(r_i,R_i)$ is different from that for $(r_i',R_i')$ with high probability if the hash function is chosen randomly
after the tampering occurred.
In the SJST protocol, the sender sends a pair of keys $(r_i, R_i)$ over the $i$th channel.
Then, the receiver chooses $n$ pairwise independent hash functions $h_i$'s, and sends them over the public channel.
By comparing hash values for $(r_i,R_i)$'s sent by the sender with those for $(r_i',R_i')$'s received by the receiver,
they can identify the channels for which messages, i.e., keys, were tampered.
By ignoring keys sent over such channels, the sender can correctly encrypt a message $m$ with untampered keys 
and send the encryption reliably over the public channel.

We give a formal description of the SJST protocol in Figure~\ref{fig:sjst},
a three-round protocol that achieves
reliability with $\delta = (n-1)\cdot 2^{1-\ell}$,
where $\ell$ is the length of hash values.

\begin{figure}[t]
  \begin{framed}
    Let $n$ be the number of channels, $m \in \calM$ the message to be sent by the sender $\calS$, and
  $H = \{h \colon \{0,1\}^k \rightarrow \{0, 1\}^\ell\}$ a class of pairwise independent hash functions.
  \begin{enumerate}
  \item For each $i \in \{1, \dots, n\}$, $\calS$ chooses $r_i \in \bin^\ell$ and $R_i \in \bin^k$ uniformly at random,
    and sends the pair $(r_i, R_i)$ over the $i$th channel.
  \item For each $i \in \{1, \dots, n\}$, $\calR$ receives $(r_i',R_i')$ through the $i$th channel,
    and then chooses $h_i \leftarrow H$ uniformly at random.
    If  $|r_i'| \neq \ell$ or $|R_i'| \neq k$, set $b_i = 1$, and otherwise, set $b_i = 0$.
    Then, set $T_i' = r_i' \oplus h_i(R_i')$, and $H_i = (h_i, T_i')$ if $b_i=0$, and $H_i = \bot$ otherwise.
    Finally, $\calR$ sends $(B, H_1, \dots, H_n)$ over the public channel, where $B = (b_1, \dots, b_n)$.
  \item $\calS$ receives $(B, H_1, \dots, H_n)$ through the public channel.
    For each $i \in \{1, \dots, n\}$ with $b_i = 0$, $\calS$ computes $T_i = r_i \oplus h_i(R_i)$,
    and sets $v_i = 0$ if $T_i = T_i'$, and $v_i = 1$ otherwise.
    Then, $\calS$ sends $(V, c)$ over the public channel, where $V = (v_1, \dots, v_n)$,
    and $c = m \oplus (\bigoplus_{v_i=0}R_i)$.
  \item On receiving $(V,c)$, $\calR$ 
    recovers $m = c \oplus (\bigoplus_{v_i=0}R_i)$.
  \end{enumerate}
  \end{framed}
  \caption{The SJST Protocol}\label{fig:sjst}
\end{figure}

\begin{theorem}[\cite{SJST11}]
The SJST protocol is $(0,(n-1)\cdot 2^{1-\ell})$-SMT against $t$-adversary for any $t < n$.
\end{theorem}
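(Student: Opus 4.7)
The plan is to prove the two properties of $(0, (n-1) \cdot 2^{1-\ell})$-SMT separately: perfect privacy and reliability with failure probability bounded by $(n-1) \cdot 2^{1-\ell}$. Correctness is immediate, since if the adversary does not alter any message then $(r_i', R_i') = (r_i, R_i)$ for every $i$, so $b_i = v_i = 0$ for all $i$, and the receiver recovers $c \oplus \bigoplus_i R_i = m$.

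For privacy, I would exploit the assumption $t < n$ to fix a channel $j$ that is not corrupted by $\calA$. On channel $j$ the pair $(r_j, R_j) \in \bin^\ell \times \bin^k$ is never seen by $\calA$ directly. The only leakage about channel $j$ comes from the public channel, namely $T_j' = r_j \oplus h_j(R_j)$ inside $H_j$, together with $v_j$ and $c$. Since $r_j$ is uniform on $\bin^\ell$ and independent of $R_j$, it acts as a one-time pad on $h_j(R_j)$, so $T_j'$ is uniform and reveals nothing about $R_j$. Because channel $j$ is untampered, we have $v_j = 0$, so $R_j$ appears in the XOR defining $c = m \oplus \bigoplus_{v_i = 0} R_i$. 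Conditioned on the rest of the adversary's view, $R_j$ is uniform on $\bin^k$, which makes $c$ a perfect one-time pad encryption of $m$. Hence $V_A(m_0, r_A)$ and $V_A(m_1, r_A)$ are identically distributed, giving $\varepsilon = 0$.

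For reliability, the main observation is that the receiver's output disagrees with $m$ exactly when some channel $i$ has $b_i = v_i = 0$ while $R_i \neq R_i'$ (because then $R_i'$ is XORed into the receiver's reconstruction although $R_i$ is XORed into $c$). Tampering on channel $i$ can only occur if $\calA$ corrupts channel $i$, and it can only happen in round~$1$, since rounds~$2$ and~$3$ go over the authentic public channel. Fix a corrupted channel $i$: if $R_i = R_i'$ the bad event is impossible, and if $|r_i'| \neq \ell$ or $|R_i'| \neq k$ then $b_i = 1$ and channel $i$ contributes nothing to the final XOR. Otherwise $R_i \neq R_i'$ and the failure event is $T_i = T_i'$, i.e.\ $h_i(R_i) \oplus h_i(R_i') = r_i \oplus r_i'$. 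Since $h_i$ is sampled by $\calR$ from the pairwise independent family after $\calA$ has committed to $(r_i', R_i')$, the random variable $h_i(R_i) \oplus h_i(R_i')$ is uniform on $\bin^\ell$ (as recalled in Appendix~\ref{sec:uh}), so this event has probability at most $2^{1-\ell}$. A union bound over the at most $t \leq n-1$ corrupted channels yields $\delta \leq (n-1)\cdot 2^{1-\ell}$.

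The main subtlety I expect to be careful about is the rushing behaviour of $\calA$: the independence of $h_i$ from the tampering must be justified. Because the second- and third-round messages travel over the authentic public channel, $\calA$'s only tampering opportunity is the first-round delivery of $(r_i, R_i)$, and the tampered pair $(r_i', R_i')$ is fully determined before $\calR$ samples $h_i$ in round~$2$. This temporal separation is what licenses treating $h_i(R_i) \oplus h_i(R_i')$ as a fresh uniform value when applying the pairwise-independence bound, and it is equally what ensures that the adversary's view for privacy is well-defined without circularity.
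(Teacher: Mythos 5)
Your proof is correct and follows essentially the same route as the paper's sketch: perfect privacy via the at least one uncorrupted channel whose key $R_j$ remains a uniform one-time pad on $c$ (with the additional, worthwhile observation that $T_j'$ on the public channel leaks nothing because $r_j$ masks $h_j(R_j)$), and reliability via the pairwise-independence collision bound $2^{1-\ell}$ per tampered channel together with a union bound over at most $n-1$ corrupted channels. Your explicit treatment of the timing (that $h_i$ is sampled only after the adversary commits to $(r_i',R_i')$) is exactly the justification the paper relies on implicitly.
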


One can find a complete proof of the above theorem in \cite{SJST11}.
For self-containment, we  give a brief sketch of the proof.
\begin{itemize}
\item \emph{Privacy}:
The adversary can get $c = m \oplus (\bigoplus_{v_i=0} R_i)$ through the public channel.
Since $m$ is masked by uniformly random $R_i$'s, the adversary has
to corrupt all the $i$th channels with $v_i=0$ to recover $m$.
However, since any $t$-adversary can corrupt at most $t$ $(< n)$ channels,
the adversary can cause $v_i = 1$ for at most $n-1$ $i$'s.
There is at least one $i$ with $v_i = 0$, for which
the adversary cannot obtain $R_i$.
Thus, the protocol satisfies the perfect privacy.
\item \emph{Reliability}:
  Since the protocol uses  the public channel in the second and the third rounds,
  the adversary can tamper with channels only in the first round.
  Suppose that the adversary tampers with $(r_i,R_i)$.
  If $R_i \ne R_i'$ and $T_i = T_i'$, then
  $\calR$ would recover a wrong message, but the tampering is not detected.
  The property of pairwise independent hash functions (Appendix~\ref{sec:uh}) implies that 
  the above event happens with probability at most $(n-1)2^{1-\ell}$.
  Thus, the protocol achieves reliability with $\delta = (n-1)\cdot 2^{1-\ell}$. 
\end{itemize}

For our purpose, we slightly modify the SJST protocol such that in the second and the third rounds,
if $b_i = 1$ in $B$ or $v_j = 1$ in $V$ for some $i, j \in \{1, \dots, n\}$,
the special message ``DETECT'' is also sent.
We clarify the parameters of the SJST protocol to work as SMT against timid adversaries.

\begin{theorem}\label{thm:sjst_single}
  If the parameter $\ell$ in the SJST protocol satisfies
  \begin{equation*}
    \ell \geq \max\left\{1 + \log_2 t + \log_2 \frac{u_3 - u_4}{u_2 - u_4 - \alpha}, 1 + \frac{1}{t} \log_2 \frac{u_1-u_3}{\alpha} \right\}
  \end{equation*}
  for some  $\alpha \in (0, u_2 - u_4)$,
  then the protocol is perfectly secure against a rational $t$-adversary with utility function $U \in U_\mathsf{timid}$ for any $t < n$.
\end{theorem}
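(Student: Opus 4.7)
Take $\calB$ to be the passive random-guessing adversary who forwards every bit on the corrupted channels unchanged and outputs a uniformly random $M_A \in \calM$. Under $\calB$, the SJST protocol runs exactly as if no adversary attacked: correctness gives $M_R = M_S$ and, since no tampering occurred, $\suc = 1$ and $\detect = 0$ deterministically. The perfect privacy of SJST against $\calB$ follows because $t < n$ leaves at least one uncorrupted channel whose pad $\calB$ never sees, so $\calB$'s view is statistically independent of $M_S$ and $\Pr[\guess = 1] = 1/|\calM|$. Thus $\Pi$ is $(0,0)$-SMT against $\calB$ with $u(\calB) = u_2$. The same privacy argument gives $\Pr[\guess = 1] = 1/|\calM|$ for \emph{every} $t$-adversary (tampering on corrupted channels cannot expose a pad on an uncorrupted channel), so for any $\calA$ the four possible outcomes realize exactly the utility values $u_1, u_2, u_3, u_4$.

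For the Nash-equilibrium condition, fix an arbitrary $t$-adversary $\calA$, let $Q$ be the probability that $\calA$ modifies at least one bit on the corrupted channels, and set $q_{ab} = \Pr[\suc = a,\, \detect = b \mid \text{tampering}]$. The SJST decoder is structured so that every channel $i$ with $b_i = 0$ contributes to the final reconstruction; hence every undetected tampering is propagated into $M_R$, and conditional on a tampering event we have $q_{10} = 0$, so $q_{00} + q_{01} + q_{11} = 1$ and
\[
u(\calA) \;=\; (1 - Q)\,u_2 \;+\; Q\,(q_{00} u_1 + q_{01} u_3 + q_{11} u_4).
\]
It therefore suffices to show $q_{00} u_1 + q_{01} u_3 + q_{11} u_4 \leq u_2$.

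Invoke two properties of the SJST protocol described in Appendix~\ref{sec:ori_sjst}: first, $q_{00} + q_{01} \leq t \cdot 2^{-(\ell-1)}$, a union bound over the $t$ corrupted channels of the per-channel undetection probability $2^{-(\ell-1)}$ which is essentially the standard almost-reliability of SJST; and second, $q_{00} \leq 2^{-t(\ell-1)}$, which uses the fact that simultaneously forcing failure and evading \emph{all} of the $t$ independent per-channel authentication tags (each of length $\ell$) requires beating each one. Decompose
\[
q_{00} u_1 + q_{01} u_3 + q_{11} u_4 - u_4 \;=\; q_{00}(u_1 - u_3) \;+\; (q_{00} + q_{01})(u_3 - u_4).
\]
The two bounds on $\ell$ in the hypothesis rearrange to $2^{-t(\ell-1)}(u_1 - u_3) \leq \alpha$ and $t \cdot 2^{-(\ell-1)}(u_3 - u_4) \leq u_2 - u_4 - \alpha$, whose sum is $u_2 - u_4$. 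Plugging in the two probability bounds yields
\[
q_{00} u_1 + q_{01} u_3 + q_{11} u_4 \;\leq\; u_4 + \alpha + (u_2 - u_4 - \alpha) \;=\; u_2,
\]
so $u(\calA) \leq (1-Q) u_2 + Q u_2 = u_2 = u(\calB)$, establishing the Nash equilibrium.

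The main obstacle is to extract the two probability bounds from the SJST construction. The first is essentially the known $(0,\,t \cdot 2^{-(\ell-1)})$-SMT-PD guarantee of SJST; the second, $q_{00} \leq 2^{-t(\ell-1)}$, is not a direct consequence of almost-reliability and requires looking at the inner authentication layer, exploiting independence of the per-channel hashes. A subsidiary technicality is to make the $q_{10} = 0$ claim rigorous against a rushing, adaptive $\calA$: this is handled by arguing per realization of $\calA$'s random tape that any channel actually modified by $\calA$ is either flagged (making $\detect = 1$) or carried unflagged into the decoder's XOR (making $\suc = 0$), and then averaging.
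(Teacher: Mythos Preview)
Your approach mirrors the paper's: both take $\calB$ to be the random-guessing adversary, invoke perfect privacy to fix $\Pr[\guess=1]=1/|\calM|$, partition the tampering outcomes (your $q_{00},q_{01},q_{11}$ correspond exactly to the paper's events $E_1,E_2,E_3$: no tampering detected / some but not all / all), and then combine the two inequalities $(u_1-u_3)\,2^{(1-\ell)t} \leq \alpha$ and $(u_3-u_4)\,t\,2^{1-\ell} \leq u_2-u_4-\alpha$ extracted from the hypothesis on $\ell$.

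The step that does not go through is $q_{00} \leq 2^{-t(\ell-1)}$. Your own framework conditions only on ``at least one bit modified,'' so it must cover an adversary who tampers with a \emph{single} channel; in that case $q_{00}=2^{1-\ell}$, which for $t>1$ can dwarf $2^{(1-\ell)t}$. Concretely, take $u_1=10^4$, $u_2=1$, $u_3=\tfrac12$, $u_4=0$, $t=2$, $\alpha=10^{-2}$: the hypothesis is met at $\ell=11$, yet a one-channel tamper yields expected utility $10^4\cdot 2^{-10}\approx 9.8>u_2$. The paper's proof contains the identical leap---it simply writes ``there are exactly $t$ distinct $i$'s such that $(r_i',R_i')\neq(r_i,R_i)$'' and never argues that tampering on fewer channels is dominated---so you have reproduced the paper's argument together with its unaddressed case. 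A smaller inaccuracy: your claim $q_{10}=0$ is not literally true, since the adversary can offset two $R_i$'s by the same $\Delta$ so the errors cancel in the receiver's XOR while both evade detection; the paper sidesteps this by upper-bounding the utility on the event $E_1=\{\detect=0\}$ by $u_1$ directly, rather than splitting further on $\suc$.
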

\begin{proof}
  The perfect security of Definition~\ref{def:rsmtsec} immediately follows by letting $\calB$ be a random guessing adversary.
  We show that the strategy of $\calB$ is a Nash equilibrium.
  Note that $u(\calB, U) = u_2$, since $\Pr[ \guess = 1] = \Pr[ M_A = M_S] = 1/|\calM|$ in the SMT game.
  Thus, it is sufficient to show that $u(\calA, U) \leq u_2$ for any $t$-adversary $\calA$.
  Also, note that, since the SJST protocol achieves the perfect privacy,
  it holds that $\Pr[ \guess = 1] = 1/|\calM|$ for any $t$-adversary.
  
  Messages in the second and the third rounds are sent through the public channel.
  Thus, $\calA$ can tamper with messages only in the first round.
  If $\calA$ changes the lengths of $r_i$ and $R_i$, the tampering of the $i$th channel will be detected.
  Such channels are simply ignored in the second and third rounds. Thus, such tampering cannot increase the utility.
  Hence, we assume that $\calA$ does not change the lengths of $r_i$ and $R_i$ in the first round.

  Suppose that $\calA$ corrupts some $t$ channels in the first round.
  Namely, there are exactly $t$ distinct $i$'s such that $(r_i', R_i') \neq (r_i, R_i)$.
  %If $r_i' \neq r_i$ and $R_i' = R_i$,
  Note that the tampering on the $i$th channel such that $r_i' \neq r_i$ and $R_i' = R_i$ does not increase the probability that $\suc = 0$,
  but may increase the probability of detection.
  Thus, we also assume that $R_i' \neq R_i$ for all the corrupted channels.
  We define the following three events:
  \begin{itemize}
  \item $E_1$: No tampering is detected in the protocol;
  \item $E_2$: At least one but not all tampering actions are detected;
  \item $E_3$: All the $t$ tampering actions are detected.
  \end{itemize}
  Note that all the events are disjoint, and either event should occur. Namely, we have that $\Pr[E_1] + \Pr[E_2] + \Pr[E_3] = 1$.
  It follows from the discussion in Appendix~\ref{sec:uh} that the probability that
  the tampering action on one channel is not detected is $2^{1-\ell}$.
  Since each hash function $h_i$ is chosen independently for each channel,  we have that $\Pr[ E_1] = 2^{(1-\ell)t}$.
  Similarly, we obtain that $\Pr[ E_3 ] = (1 - 2^{1-\ell})^t$.
  Note that the utility when $E_1$ occurs is at most $u_1$.
  Also, the utilities when $E_2$ and $E_3$ occur are at most $u_3$ and $u_4$, respectively.
  Therefore, the utility of $\calA$ satisfies
  \begin{align}
    u(\calA, U) & \leq u_1 \cdot \Pr[E_1] + u_3 \cdot \Pr[E_2]  + u_4 \cdot \Pr[E_3] \nonumber\\
    & = u_3 + (u_1 - u_3) \Pr[E_1]- (u_3 - u_4)\Pr[E_3]\nonumber\\
    & \leq u_3 + (u_1 - u_3) 2^{(1-\ell)t} - (u_3 - u_4)\left(1- t 2^{1-\ell}\right)\nonumber\\
    & \leq u_3 + \alpha - (u_3 - u_4)\left(1- t 2^{1-\ell}\right)\label{eq:1}\\
    & \leq u_2,\label{eq:2}
  \end{align}
  where we use the relations $\ell \geq 1 + \frac{1}{t} \log_2 \frac{u_1-u_3}{\alpha}$ and $\ell \geq 1 + \log_2 t + \log_2 \frac{u_3 - u_4}{u_2 - u_4 - \alpha}$
  in (\ref{eq:1}) and (\ref{eq:2}), respectively.
  The utility of $\calA$ is at most $u_2$, and hence the statement follows.
\end{proof}

%For strictly timid adversaries, namely,
If $u_2 > u_3$, which holds for strictly timid adversaries,
by choosing $\alpha = u_2 - u_3$, the condition on $\ell$ is that
\begin{equation*}
  \ell \geq \max\left\{ 1 + \log_2 t, 1 + \frac{1}{t} \log_2 \frac{u_1 - u_3}{u_2-u_3} \right\}.
  \end{equation*}

\subsection{Protocol against a Strictly Timid Adversary}\label{sec:strict}

We show that, under the condition that $u_2 > u_3$,
a \emph{robust secret sharing scheme} 
gives a non-interactive perfect SMT protocol.
Namely, we can construct a non-interactive protocol for strictly timid adversaries.

\subsubsection{Robust Secret Sharing}\label{sec:rss}

\emph{Secret sharing}, introduced by Shamir~\cite{Sha79} and Blackley~\cite{Bla79},
enables us to distribute secret information securely.
Let $s \in \F$ be a secret from some finite field $\F$.
A (threshold) secret-sharing scheme provides a way for distributing $s$ into $n$ shares $s_1, \dots, s_n$ such that,
for some parameter $t > 0$,
(1) any $t$ shares give no information about $s$, and
(2) any $t+1$ shares uniquely determine $s$.

\begin{definition}
  Let $t,n$ be positive integers with $t < n$.
  A \emph{$(t, n)$-secret sharing scheme} with range $\mathcal{G}$ consists of
  two algorithms $(\Share, \Reconst)$ satisfying the following conditions:
  \begin{itemize}
  \item \emph{Correctness}: For any $s \in \mathcal{G}$ and $I \subseteq \{1, \dots, n\}$ with $|I| > t$,
    \begin{equation*}
      \Pr \left[ (\tilde{s},J) \gets \Reconst\left( \{i, s_i\}_{i \in I}\right) \wedge \tilde{s} = s \right] = 1, 
    \end{equation*}
    where $(s_1, \dots, s_n) \leftarrow \Share(s)$, and
  \item \emph{Perfect privacy}: For any $s, s' \in \mathcal{G}$ and $I \subseteq \{1, \dots, n\}$ with $|I| \leq t$,
    \begin{equation*}
      \Delta\left( \{s_i\}_{i \in I}, \{s'_i\}_{i \in I}\right) = 0,
    \end{equation*}
    where $(s_1, \dots, s_n) \leftarrow \Share(s)$ and $(s_1', \dots, s_n') \leftarrow \Share(s')$.
    \end{itemize}
  \end{definition}

Shamir~\cite{Sha79} gave a $(t,n)$-secret sharing scheme based on polynomial evaluations for any $t < n$.
Let $\F$ be a finite field of size at least $n$.
Then, for a given secret $s \in \F$, the sharing algorithm chooses random elements $r_1, \dots ,r_t \in \F$,
and constructs a polynomial $f(x) = s + r_1x+r_2x^2+ \dots + r_t x^t$ of degree $t$ over $\F$.
Then, for a fixed set of $n$ distinct elements $\{a_1, \dots, a_n\} \subseteq \F$,
the $i$th share is $f(a_i)$ for $i \in \{1,\dots,n\}$.
Given $\{i, f(a_i)\}_{i \in I}$ for $|I| > t$, the reconstruction algorithm
recovers 
$f$ by polynomial interpolation, and outputs $f(0) = s$ as a recovered secret.

McEliece and Sarwate~\cite{MS81} observed that Shamir's scheme is closely related to Reed-Solomon codes,
and thus the shares can be efficiently recovered even if some of them have been tampered with.
We use the fact that even if at most $\lfloor (n-1)/3 \rfloor$ out of the $n$ shares are tampered with,
the original secret can be correctly recovered by decoding algorithms of Reed-Solomon codes.
This property is called \emph{robustness}.
Although robustness is a desirable property, it is known that robust secret sharing is impossible when $t/2$ shares are tampered with~\cite{IOS12}.

In this work, we need a weaker notion of robustness in which any tampering actions should be detected with high probability.
Such robust secret sharing was studied by Cramer et al.~\cite{CDFPW08}.
They introduced the notion of \emph{algebraic manipulation detection (AMD) codes}
and presented a simple way for constructing robust secret sharing from \emph{linear} secret sharing and AMD codes.
The robustness required for our protocol is slightly different from the one defined in~\cite{CDFPW08}\footnote{The robustness
  in~\cite{CDFPW08} requires that the output of the reconstruction algorithm should be either the original message or the failure symbol with high probability.
  Namely, it is allowed to recover the original message even if some shares are tampered with.
In Definition~\ref{def:robustss}, we require that if some shares are tampered with, the output of the reconstruction algorithm should be the failure symbol.}.

\begin{definition}\label{def:robustss}
  Let $t, n$ be positive integers with $t < n$.
  A \emph{$(t, n, \delta)$-robust secret sharing} scheme with range $\mathcal{G}$
  consists of two algorithms $(\Share, \Reconst)$ satisfying the following conditions:
  \begin{itemize}
  \item \emph{Correctness}: For any $s \in \mathcal{G}$ and $I \subseteq \{1, \dots, n\}$ with $|I| > t$,
    \begin{equation*}
      \Pr \left[ \Reconst\left( \{i, s_i\}_{i \in I}\right) = s \right] = 1, 
    \end{equation*}
    where $(s_1, \dots, s_n) \leftarrow \Share(s)$.
  \item \emph{Perfect Privacy}: For any $s, s' \in \mathcal{G}$ and $I \subseteq \{1, \dots, n\}$ with $|I| \leq t$,
    \begin{equation*}
      \Delta\left( \{s_i\}_{i \in I}, \{s'_i\}_{i \in I}\right) = 0,
    \end{equation*}
    where $(s_1, \dots, s_n) \leftarrow \Share(s)$ and $(s_1', \dots, s_n') \leftarrow \Share(s')$.
  \item \emph{Robustness}: For any $s \in \mathcal{G}$ and $I \subseteq \{1, \dots, n\}$ with $|I| \leq t$ and adversary $\calA$,
    if $\tilde{s}_i \neq s_i$ for some $i \in \{1, \dots, n\}$,
    \begin{equation*}
      \Pr \left[ \Reconst\left( \{i, \tilde{s}_i\}_{i \in \{1,\dots,n\}}\right) \neq\bot \right] \leq \delta, 
    \end{equation*}
    where
    \begin{equation*}
      \tilde{s}_i =
      \begin{cases}
        \calA(i, s, \{s_i\}_{i \in I}) & \text{if } i \in I\\
        s_i & \text{if } i \notin I
      \end{cases}  
    \end{equation*}
    and $(s_1, \dots, s_n) \leftarrow \Share(s)$.
  \end{itemize}
\end{definition}

We can see that the construction of~\cite{CDFPW08} satisfies the above definition.
Specifically, we have the following theorem, which will be used in our protocol in Section~\ref{sec:robustsubsec}.
See Appendix~\ref{sec:proofrobustss} for the proof.
\begin{theorem}\label{thm:robustss}
  Let $\F$ be a finite field of size $q$ and characteristic $p$, and $d$ an integer such that $d+2$ is not divisible by $p$.
  For any positive integers $t$ and $n$ satisfying $t < n \leq qd$, there is an explicit and efficient scheme of $(t, n, (d+1)/q)$-robust secret sharing
  with range $\F^d$, where each share is an element of $\F^{d+2}$.
\end{theorem}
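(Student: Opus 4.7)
The plan is to follow the Cramer--Dodis--Fehr--Padr\'o--Wichs construction~\cite{CDFPW08}, combining an \emph{algebraic manipulation detection} (AMD) code with a linear threshold secret-sharing scheme: first encode the secret into an AMD codeword that turns every non-zero additive tampering into a detected error with probability at least $1-(d+1)/q$, and then secret-share the codeword so that this detectability survives the sharing layer.

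First I would define the systematic AMD encoding $E\colon\F^d \to \F^{d+2}$ by $E(s) = \bigl(s,\, r,\, g(s,r)\bigr)$, where $r$ is drawn uniformly from $\F$ and $g(s,r) = r^{d+2} + \sum_{i=1}^{d} s_i\, r^i$. The key lemma to verify is: for every $s \in \F^d$ and every non-zero $(e_s, e_r, e_f) \in \F^{d+2}$,
\[
\Pr_{r \leftarrow \F}\!\bigl[\,g(s+e_s,\, r+e_r) - g(s,r) = e_f\,\bigr] \;\le\; \frac{d+1}{q}.
\]
The proof reduces to bounding the number of zeros of a non-zero univariate polynomial in $r$. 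If $e_r = 0$, the difference equals $\sum_i e_{s,i}\, r^i - e_f$, which is non-zero (since either $e_s \ne 0$ or $e_f \ne 0$) and has degree at most $d$. If $e_r \ne 0$, the expression has degree exactly $d+1$: the leading term of $(r+e_r)^{d+2} - r^{d+2}$ is $(d+2)\, e_r\, r^{d+1}$, which is non-zero thanks to the hypothesis $p \nmid d+2$. Either way, the polynomial has at most $d+1$ roots among the $q$ elements of $\F$.

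Next I would compose $E$ with a Shamir-type $(t,n)$ linear secret sharing applied coordinate-wise to $c = E(s) \in \F^{d+2}$, so that the final share space is $\F^{d+2}$. Correctness and perfect privacy against $t$ corruptions follow directly from Shamir in each of the $d+2$ coordinates; in particular, $t$ shares leak nothing about $s$, $r$, or $g(s,r)$. For robustness, suppose the adversary tampers with shares on a set $I$ with $|I|\le t$. The reconstruction interpolates each coordinate from the $n$ received shares, checks that these $n$ shares are consistent with a degree-$t$ polynomial in every coordinate, and finally runs the AMD check $g(\tilde s,\tilde r)\stackrel{?}{=}\tilde f$. If the shares are inconsistent, it outputs $\bot$; otherwise the decoded codeword is $c' = c + e$ for some error $e\in\F^{d+2}$. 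Because a degree-$t$ polynomial is pinned down by any $t+1$ of its evaluations, consistency after genuine tampering forces a different polynomial, so $e \ne 0$. Crucially, $e$ depends only on $\calA(i,s,\{s_i\}_{i\in I})$, which by Shamir privacy is independent of $r$; hence the AMD lemma applies and the check fails with probability at least $1 - (d+1)/q$, yielding the claimed $\delta = (d+1)/q$.

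The main obstacle I anticipate is the mismatch between the desired share count and the field size: classical Shamir requires $n \le q$, while the theorem permits $n \le qd$. I would bridge this either by instantiating the sharing over the degree-$d$ extension $\F_{q^d}$ (viewing $\F^d$ as a single element of $\F_{q^d}$, then unpacking each share back into $\F^{d+2}$), or by a packed-Shamir variant that spends the $d$ coordinates on additional evaluation points. A second subtlety is that Definition~\ref{def:robustss} is stricter than the original CDFPW formulation --- it demands the output be $\bot$ whenever \emph{any} share is tampered, not merely that the wrong secret is never returned --- so the reconstruction must be written to output $\bot$ unless \emph{both} the consistency test and the AMD check pass; the analysis above in fact gives this stronger guarantee, since the only non-$\bot$ outcome after tampering occurs when the AMD check erroneously accepts, an event bounded by $(d+1)/q$.
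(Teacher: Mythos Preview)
Your approach---AMD-encode the secret, then linearly secret-share the codeword---is exactly the paper's (Propositions~\ref{prop:amd} and~\ref{prop:robustss} in Appendix~\ref{sec:proofrobustss}), and your AMD lemma and its case analysis match theirs. The gap is in your robustness argument: you write ``consistency after genuine tampering forces a different polynomial, so $e \ne 0$,'' but the second clause does not follow, since distinct degree-$t$ polynomials can agree at the evaluation point $0$. Concretely, whenever $t \ge (n+1)/2$ the adversary can pick a non-zero polynomial $q$ of degree at most $t$ vanishing at $0$ and at every uncorrupted evaluation point (that is $n-t+1 \le t$ root constraints, so such $q$ exists), and add $q(a_i)$ to each share it controls. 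The received shares are then consistent with $p+q$, the recovered codeword is $E(s)+q(0)=E(s)$, and the AMD check accepts with probability~$1$; the output is $s$, not $\bot$. This refutes your closing claim that ``the only non-$\bot$ outcome after tampering occurs when the AMD check erroneously accepts,'' and it violates the strict robustness of Definition~\ref{def:robustss} that you yourself flag as stronger than the original CDFPW notion. The paper's own proof of Proposition~\ref{prop:robustss} slides over the same point---it invokes AMD security without arguing that the induced additive offset $\Delta$ is non-zero---so the defect is shared; it turns out to be harmless for the downstream use in Theorem~\ref{thm:sttimid} (recovering the correct secret after tampering yields utility exactly $u_2$, no better than staying silent), but the scheme does not meet Definition~\ref{def:robustss} as written once $t > n/2$.

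On the $n\le qd$ mismatch you flag: the codeword lives in $\F^{d+2}$, not $\F^d$, so the clean fix is Shamir over $\F_{q^{d+2}}$ (allowing $n$ up to roughly $q^{d+2}$ with shares in $\F^{d+2}$ via the $\F$-linear isomorphism), not over $\F_{q^d}$ as you suggest; a packed-Shamir variant does not add evaluation points.
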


\subsubsection{Our Protocol}\label{sec:robustsubsec}

Let $(\Share, \Reconst)$ be a $(t,n,\delta)$-robust secret sharing scheme with range $\calM$.
In the protocol, given a message $m \in \calM$, the sender generates $n$ shares $(s_1, \dots, s_n)$ by $\Share(m)$,
and sends each $s_i$ over the $i$th channel.
The receiver simply recovers the message by $\Reconst(\{i, \tilde{s}_i\}_{i \in \{1,\dots,n\}})$, where $\tilde{s}_i$ is the received message over the $i$th channel.

\begin{theorem}\label{thm:sttimid}
  The above protocol using a $(t, n, \delta)$-robust secret sharing scheme
  is perfectly secure against a rational $t$-adversary with utility function $U \in U_\mathsf{st\mhyphen timid}$
  if $U$ satisfies  $u_2 > u_3$ and 
  \begin{equation*}
    \delta \leq \frac{u_2 - u_3}{u_1 - u_3}.
  \end{equation*}
\end{theorem}
\begin{proof}
  As in the proof of Theorem~\ref{thm:sjst_single}, we consider a random guessing adversary $\calB$.
  Then, the perfect security immediately follows.

  We show that for any $t$-adversary $\calA$, $u(\calA,U) \leq u(\calB,U)$.
  As discussed in the proof of Theorem~\ref{thm:sjst_single}, it is sufficient to prove that $u(\calA, U) \leq u_2$ for any $\calA$.
  Since the underlying secret sharing has the perfect privacy, we have that $\Pr[ \guess = 1] = 1/|\calM|$ for any $t$-adversary.
  Suppose $\calA$ corrupts some $t$ channels and alters some messages $s_i$ into different $\tilde{s}_i$.
  It follows from the robustness of secret sharing that the tampering  is detected with probability at least $1-\delta$,
  in which case the secret is not recovered.
  Thus, the utility of $\calA$ is
  \begin{align}
    u(\calA, U) & \leq (1 - \delta) u_3 + \delta u_1  \leq u_2, \label{eq:3}
  \end{align}
  where (\ref{eq:3}) follows from the assumption.
  Therefore, the statement follows.
\end{proof}

The following corollary immediately follows.
\begin{corollary}
  Let $\F$ be a finite field of size $q = 2^\ell$, and $d$ be any odd integer.
  The non-interactive protocol based on Theorem~\ref{thm:robustss}
  is an SMT protocol with message space $\F^d$ that 
  is perfectly secure against a rational $t$-adversary with utility function
  $U \in U_\mathsf{st\mhyphen timid}$ for any $t < n \leq 2^\ell d$ if
  \[   \ell \geq \log_2 (d+1) + \log_2\frac{u_1 - u_3}{u_2 - u_3}.\]
\end{corollary}

\section{Impossibility Result for General Timid Adversaries}\label{sec:impossibility}

We show that no SMT protocol is secure against a general timid $t$-adversary for $t \geq n/2$
without the public channel.
The result implies that using the public channel in Theorem~\ref{thm:sjst_single} is necessary for achieving $t \geq n/2$.
It also demonstrates the necessity of restricting the utility in Theorem~\ref{thm:sttimid}
for constructing protocols for $t \geq n/2$ without using the public channel.

\begin{theorem}\label{thm:impossibility}
  For any SMT protocol without using the public channel
  that is perfectly secure against a rational $t$-adversary with utility function $U \in U_\mathsf{timid}$,
  if $U$ has the relation
  \[ u_2 < \frac{1}{2}\left( 1 - \frac{1}{|\calM|}\right) u_3 \]
  then $t < n/2$, where $\calM$ is the message space of the protocol.
\end{theorem}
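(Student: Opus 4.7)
The plan is to argue by contradiction: assume $t \geq n/2$ and $\Pi$ satisfies the hypotheses, and derive a violation of the Nash condition. First, the reference adversary $\calB$ from Definition~\ref{def:rsmtsec} satisfies $u(\calB, U) \leq u_2$, because perfect correctness forces $\suc = 1$ almost surely in the game $\games(\Pi, \calB)$, perfect privacy forces $\Pr[\guess = 1] = 1/|\calM|$, and $u_4 < u_2$ bounds the resulting convex combination of $u_2$ and $u_4$. It therefore suffices to produce a $t$-adversary whose expected utility strictly exceeds $u_2$.

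Because $t \geq n/2$, partition the channels into $T_1 \sqcup T_2$ with $|T_i| \leq t$. For $i \in \{1, 2\}$ let $\calA_i$ be the sender-substitution adversary of~\cite{DDWY93}: it corrupts $T_i$, samples a fake message $m' \in \calM$ uniformly with fresh sender and receiver coins $r_S', r_R'$, runs an internal copy of $\Pi$ with virtual sender input $(m', r_S')$ paired against a virtual receiver with coins $r_R'$, and in every round \emph{intercepts both communication directions on $T_i$}, replacing the real sender's $T_i$-messages by its virtual sender's $T_i$-messages and the real receiver's $T_i$-responses (as seen by the real sender) by its virtual receiver's $T_i$-responses; finally it outputs $M_A = m'$. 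The standard multi-round symmetry argument then shows that under the swap $(m, r_S, r_R) \leftrightarrow (m', r_S', r_R')$ the joint distribution of (real receiver's transcript, real receiver's output $\hat M$) induced by the $\calA_1$-attack coincides with that induced by the $\calA_2$-attack. Since under $\calA_1$ success is $\{\hat M = m\}$ and under $\calA_2$ after the swap it is $\{\hat M = m'\}$, and $\hat M$ cannot equal two distinct values,
\begin{equation*}
\Pr\nolimits_{\calA_1}[\suc=0] + \Pr\nolimits_{\calA_2}[\suc=0] \geq \Pr[m \neq m'] = 1 - \tfrac{1}{|\calM|},
\end{equation*}
so at least one $\calA_i$ satisfies $\Pr[\suc = 0] \geq \tfrac{1}{2}(1 - 1/|\calM|)$.

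Since $m'$ is uniform and independent of $M_S$, $\Pr[\guess = 1] = 1/|\calM|$, so the expected utility decomposes in terms of $u_1, u_2, u_3, u_4$. Using $u_1 \geq u_3$ and $u_2 \geq u_4$ from $U \in U_\mathsf{timid}$, I obtain
\begin{equation*}
u(\calA_i, U) \geq u_3 \Pr[\suc=0] + u_4 \Pr[\suc=1].
\end{equation*}
After a constant shift of $U$ normalizing $u_4 = 0$ (which preserves the Nash comparison), the hypothesis $u_2 < \tfrac{1}{2}(1 - 1/|\calM|) u_3$ yields $u(\calA_i, U) \geq \tfrac{1}{2}(1 - 1/|\calM|) u_3 > u_2 \geq u(\calB, U)$, contradicting the Nash condition.

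The hardest step is the multi-round distributional symmetry. Each $\calA_i$ must intercept both communication directions on its corrupted channels so that in both paired worlds one half of the channels is driven by a real sender/real receiver pair and the other half by a virtual sender/virtual receiver pair, making the swap a true symmetry of the joint execution; carrying out this round-by-round coupling cleanly, and justifying the $u_4$-normalization in the final inequality, are the two technical points that will require care.
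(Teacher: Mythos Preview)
Your overall strategy---sender-substitution on half the channels plus the Dolev--Dwork--Waarts--Yung symmetry to force $\Pr[\suc=0]\ge\tfrac12(1-1/|\calM|)$---is exactly the paper's. Your two fixed-partition adversaries with an averaging step are equivalent to the paper's single adversary who picks the corrupted half $I$ uniformly at random; the paper's description of the simulation is terser but means the same bidirectional interception you spell out.

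The genuine gap is your final normalization. A constant shift $U\mapsto U-u_4$ does preserve the Nash comparison, but it does \emph{not} preserve the hypothesis $u_2<\tfrac12(1-1/|\calM|)\,u_3$: this is a multiplicative relation between $u_2$ and $u_3$, and after the shift it would read $u_2-u_4<\tfrac12(1-1/|\calM|)(u_3-u_4)$, a different condition. So you cannot set $u_4=0$ and still invoke the hypothesis verbatim. Concretely, from your (correct) bound $u(\calA_i,U)\ge p\,u_3+(1-p)\,u_4$ with $p\ge\tfrac12(1-1/|\calM|)$ and $u_3>u_4$ you get only
\[
u(\calA_i,U)\ \ge\ \tfrac12\bigl(1-\tfrac{1}{|\calM|}\bigr)u_3+\tfrac12\bigl(1+\tfrac{1}{|\calM|}\bigr)u_4,
\]
and comparing this with $u_2$ via the hypothesis would still need $u_4\ge 0$, which is not assumed. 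The paper does not shift; it tracks the $\detect$ branch separately, lower-bounds the $\detect=0$ contribution by $(1-p_s)u_1+p_s u_2$ and the $\detect=1$ contribution by $(1-p_s)u_3$, and then collapses both to $(1-p_s)u_3$ using $u_1>u_3$. (That chain itself quietly uses nonnegativity of $u_2$ and $u_3$, so the paper is not fully sign-agnostic either; but your shift is an outright error in the place you yourself flagged as delicate, whereas the paper's issue is an unstated side condition.)
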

\begin{proof}
  Let $\Pi$ be a protocol in the statement.
  We construct a $t$-adversary $\calA$ for $t = \lceil n/2 \rceil$ that can successfully attack $\Pi$.
  For simplicity, we assume that $n = 2t$.
  
  Let $\calB$ be a random guessing adversary. 
  Since $\Pi$ is $(0,0)$-SMT against $\calB$, it holds that $u(\calB,U) \leq u_2$.
  We show the existence of a $t$-adversary $\calA$ that achieves $u(\calA,U) > u_2$,
  which implies that $\Pi$ cannot achieve a Nash equilibrium.

  In the SMT game, a message $m \in \calM$  is randomly chosen, and, on input $m$,
  $\Pi$ generates $(s_1^j, \dots, s_n^j)$ for $j = 1, \dots$,
  where $s_i^j$ is the message to be sent over the $i$th channel in the $j$th round.
  In the game, $\calA$ does the following:
  \begin{itemize}
  \item Randomly choose $I \subseteq \{1, \dots, n\}$ such that $|I| = t$, and corrupt the $i$th channel for every $i \in I$.
  \item Randomly choose $\tilde{m} \in \calM$, and simulate $\Pi$ on input $\tilde{m}$.
     Let $\tilde{s}_i^j$ be the message generated for the $i$th channel in the $j$th round.
  \item In each round $j$, for every $i \in I$, on receiving $s_i^j$ through the $i$th channel,
    exchange $s_i^j$ for $\tilde{s}_i^j$.
  \end{itemize}
  For this attack,
  the receiver cannot distinguish which message, $m$ or $\tilde{m}$,
  was originally transmitted by the sender
  since both messages for $m$ and $\tilde{m}$ are equally mixed.
  Hence, the probability that $\suc = 1$, denoted by $p_s$, is at most 
  \[ p_s  \leq \frac{1}{2}\left(1 - \frac{1}{|\calM|} \right) + \frac{1}{|\calM|}
  = \frac{1}{2}\left(1 + \frac{1}{|\calM|} \right), \]
  where $1/|\calM|$ comes from the event that $\tilde{m} = m$.

  Let $p_d$ be the probability that $\Pi$ outputs ``DETECT'' messages during the execution
  against the above attack.
  Without loss of generality, we assume that if $\Pi$ does not output ``DETECT'' messages,
  the receiver outputs some message at the end of the protocol.
  If the tampering  of $\calA$ is not detected,
  the utility of $\calA$ is at least $u_1$ with probability $1 - p_s$, and at least $u_2$ with probability $p_s$.
  If some tampering is detected, 
  there can be two cases:  (1) the receiver does not output any message, and (2) the receiver outputs some message.
  In case (1), the utility of $\calA$ is $u_3$.
  In case (2), the probability that the $\suc = 1$ is at most $p_s$ by the same argument as above.
  Hence, the utility of $\calA$ when the tampering was detected is at least $(1-p_s)u_3$.
  Thus, the utility of $\calA$ in the SMT game is at least
  \begin{align}
    u(\calA,U) & \geq (1 - p_d) \left( (1-p_s) u_1 + p_s u_2 \right) + p_d  (1-p_s) u_3 \nonumber \\
    & = (1-p_s) u_1 + p_s u_2 - p_d \left( (1 - p_s)u_1 + p_s u_2 - (1-p_s)u_3 \right) \nonumber \\
    & \geq (1 - p_s) u_3 \label{eq:4} \\
    & \geq \frac{1}{2}\left( 1 - \frac{1}{|\calM|}\right) u_3 \nonumber\\
    & > u_2, \label{eq:5}
  \end{align}
  where (\ref{eq:4}) follows from the fact that $p_d \leq 1$ and $(1-p_s)u_1 + p_s u_2 - (1-p_s)u_3 \geq 0$,
  and the assumption on $U$ is used in (\ref{eq:5}).
  Therefore, $\Pi$ does not satisfy the PSMT security for $t \geq n/2$. 

  When $n = 2t-1$, the same attack as the above $\calA$ can be implemented
  by   invalidating the $n$th channel 
  by substituting $\bot$ for every message over the $n$th channel.  
\end{proof}

The theorem gives the following corollary.
\begin{corollary}\label{cor:norsmt}
  There is no SMT protocol without a public channel that is perfectly secure against
  a rational $t$-adversary with utility function $U$ for every $U \in U_\mathsf{timid}$
  and $t \geq \lceil n/2 \rceil$.
\end{corollary}

\section{SMT against Multiple Rational Adversaries}\label{sec:multgame}

We define our security model of SMT in the presence of multiple rational adversaries.
For simplicity, we assume that each adversary corrupts different channels.
A difference from the single-adversary model in Section~\ref{sec:singlegame} is that
each adversary may prefer the tampering of other adversaries to be detected.
The SMT game is slightly changed such that the protocol needs to declare the tampering detection
with channel identifiers.
With this functionality, adversaries will notice the detection of their tampering.

 Suppose that there are $\lambda$ adversaries $1, 2, \dots, \lambda$ for $\lambda \geq 2$
 and 
adversary $j \in \{1, \dots, \lambda\}$ exclusively corrupts at most $t_j$ channels out of the $n$ channels for $t_j \geq 1$.
We have $\sum_{j=1}^\lambda t_j \leq n$.

\paragraph{The SMT Game.}
For an SMT protocol $\Pi$, we define our SMT game $\gamem(\Pi, \calA_1, \dots, \calA_\lambda)$ against $\lambda$ adversaries with the strategy profile
$(\calA_1, \dots, \calA_\lambda)$.
First, set parameters $\suc = 0$ and $\guess_j = \detect_j = 0$ for every $j \in \{1,\dots,\lambda\}$.
For the message space $\calM$ of $\Pi$, choose $m \in \calM$ uniformly at random,
and run the protocol $\Pi$ in which the message to be sent is $M_S = m$.
In the protocol execution, the sender or the receiver may send a special message ``DETECT at $i$'' for $i \in \{1, \dots, n\}$,
meaning that some tampering was detected in channel $i$.
Then, if adversary $j \in \{1, \dots, \lambda\}$ corrupts channel $i$, set $\detect_j = 1$.
After running the protocol, the receiver outputs $M_R$, and each adversary $j$ outputs $M_j$ for $j \in \{1,\dots,\lambda\}$.
If $M_R = M_S$, set $\suc = 1$.
For $j \in \{1,\dots,\lambda\}$, if $M_j = M_S$, set $\guess_j = 1$.
The outcome of the game is $\left(\suc, \{\guess_{j'},\detect_{j'}\}_{j'\in\{1,\dots,\lambda\}}\right)$.

\begin{definition}[Utility]
  The utility $U_j(\calA_1, \dots, \calA_\lambda,U)$ of adversary $j$ when the strategy profile $(\calA_1,\dots,\calA_\lambda)$
  and utility function $U$ are employed is the expected value $\E[U(j, \out)]$, where $U$ is a function that maps index $j$ and the outcome
  $\out = \left(\suc, \{\guess_{j'},\detect_{j'}\}_{j'\in\{1,\dots,\lambda\}}\right)$ of the game $\gamem(\Pi, \calA_1, \dots, \calA_\lambda)$
  to real values and the probability is taken over the randomness of the game.
\end{definition}

We define the security of SMT protocols against multiple adversaries. 
For strategies $\calB_1, \dots, \calB_\lambda$, and $\calA_j$,
we denote by $(\calA_j,\calB_{-j})$ the strategy profile $(\calB_1, \dots, \calB_{j-1}, \calA_j, \calB_{j+1}, \dots, \calB_\lambda)$.

\begin{definition}[PSMT against Multiple Rational Adversaries]\label{def:rsmtsec_mult}
  An SMT protocol $\Pi$ is \emph{perfectly secure against rational $(t_1, \dots, t_\lambda)$-adversaries with utility function $U$} if
  there are $t_j$-adversary $\calB_j$  for $j \in \{1, \dots, \lambda\}$ such that
    \begin{enumerate}
  \item \emph{Perfect security}: $\Pi$ is $(0,0)$-SMT against $(\calB_1, \dots, \calB_\lambda)$, and
  \item \emph{Nash equilibrium}: $U_j(\calA_j,\calB_{-j},U) \leq U_j(\calB_j,\calB_{-j},U)$
    for any $t_j$-adversary $\calA_j$ for every $j \in \{1, \dots, \lambda\}$.
  \end{enumerate}
\end{definition}

As in the analysis of protocols against a single adversary, we will consider a \emph{random guessing} strategy profile
$(\calB_1,\dots,\calB_\lambda)$ in which each $\calB_j$ is a random guessing adversary.
The perfect security for such a strategy profile immediately follows if the protocol satisfies the correctness.

\subsection*{Timid Adversaries}

For the case of multiple adversaries, we define timid adversaries who prefer other adversaries' tampering to be detected.
This property makes rational adversaries avoid cooperating with each other.
Let $U_\mathsf{timid}^\mathsf{mult}$ be the set of utility functions that satisfy the following three conditions:
\begin{enumerate}
\item $U(j,\out) > U(j,\out')$ if $\suc < \suc'$, $\guess_j = \guess_j'$,
  and $\detect_{j} = \detect_{j}'$; 
\item $U(j,\out) > U(j,\out')$ if $\suc = \suc'$, $\guess_j = \guess_j'$, 
  $\detect_j < \detect_j'$, and $\detect_{k} = \detect_{k}'$  for every $k \in \{1, \dots, \lambda\} \setminus \{j\}$; and
\item $U(j,\out) > U(j,\out')$ if $\suc = \suc'$, $\guess_j = \guess_j'$,  
  $\detect_{k} > \detect_{k}'$ for some $k \neq j$, and
  $\detect_{j'} = \detect_{j'}'$ for every $j' \in \{1, \dots, \lambda\} \setminus \{k\}$, 
\end{enumerate}
where $\out = \left(\suc, \{\guess_j,\detect_j\}_{j\in\{1,\dots,\lambda\}}\right)$
and $\out' = \left(\suc', \{\guess_j',\detect_j'\}_{j\in\{1,\dots,\lambda\}}\right)$
are the outcomes of the SMT game.

Let $U_\mathsf{st \mhyphen timid}^\mathsf{mult}$ be the set of utility functions satisfying the following condition in addition
to the above three:
\begin{enumerate}
   \setcounter{enumi}{3}
\item $U(j,\out) > U(j,\out')$ if $\suc > \suc'$, $\guess_j = \guess_j'$, $\detect_j < \detect_j'$, and $\detect_{k} = \detect_{k}'$
  for every $k \in \{1, \dots, \lambda\} \setminus \{j\}$.
\end{enumerate}

An adversary is said to be \emph{timid} if his utility function is in $U_\mathsf{timid}^\mathsf{mult}$,
and \emph{strictly timid} if the utility function is in $U_\mathsf{st\mhyphen timid}^\mathsf{mult}$.
For $j \in \{1,\dots,n\}$ and $b \in \bin$,
we write $\detect_{-j}=b$ if $\detect_{j'} = b$ for every $j' \in \{1,\dots, n\} \setminus \{j\}$.
In the analysis of  our protocols, we use the following values of the utility of adversary $j \in \{1,\dots,\lambda\}$.
\begin{itemize}
\item $u_1'$ is the utility when $\Pr[\guess_j = 1] = \frac{1}{|\calM|}$,   $\suc = 0$, $\detect_j=0$, $\detect_{-j} = 0$;
\item $u_2'$ is the utility when $\Pr[\guess_j = 1] = \frac{1}{|\calM|}$,   $\suc = 1$, $\detect_j=0$, $\detect_{-j} = 0$;
\item $u_3''$ is the utility when $\Pr[\guess_j = 1] = \frac{1}{|\calM|}$,   $\suc = 0$, $\detect_j=1$, $\detect_{-j} = 1$;
\item $u_3'$ is the utility when $\Pr[\guess_j = 1] = \frac{1}{|\calM|}$,   $\suc = 0$, $\detect_j=1$, $\detect_{-j} = 0$;
\item $u_4'$ is the utility when $\Pr[\guess_j = 1] = \frac{1}{|\calM|}$,   $\suc = 1$, $\detect_j = 1$, $\detect_{-j} = 0$.
\end{itemize}
For any utility function in $U_\mathsf{timid}^\mathsf{mult}$,
it holds that $u_1' > \max\{u_2', u_3''\}$, $\min\{u_2',u_3'\} > u_4'$, and $u_3'' > u_3'$.
If the utility is in $U_\mathsf{st \mhyphen timid}^\mathsf{mult}$, it holds that $u'_1 > u'_2 > u_3'' > u'_3 > u'_4$.

\section{Protocols against Multiple Adversaries}\label{sec:multadv}

\subsection{Protocol by Public Discussion}\label{sec:sjst}

We show that the SJST protocol of~\cite{SJST11} gives a perfect SMT-PD protocol against multiple adversaries.
As in Section~\ref{sec:single_sjst},
we modify the SJST protocol such that in the second and the third rounds,
if $b_i = 1$ in $B$ or $v_i = 1$ in $V$ for some $i \in \{1, \dots, n\}$,
the special message ``DETECT at $i$'' is also sent together.

\begin{theorem}\label{thm:sjst}
  For any  $\lambda \geq 2$, let $t_1, \dots, t_\lambda$ be integers
  satisfying $t_1 + \dots + t_\lambda \leq n$ and $1\leq t_i \leq n-1$ for every $i \in \{1, \dots, \lambda\}$.
  If the parameter $\ell$ in the SJST protocol satisfies
  \begin{equation*}
    \ell \geq \max_{t \in \{t_1, \dots, t_\lambda\}}\left\{1 + \log_2 t + \log_2 \frac{u'_3 - u'_4}{u'_2 - u'_4 - \alpha}, 1 + \frac{1}{t} \log_2 \frac{u'_1-u'_3}{\alpha} \right\}
  \end{equation*}
  for some  $\alpha \in (0, u'_2 - u'_4)$,
  then the protocol is perfectly secure against rational $(t_1, \dots, t_\lambda)$-adversaries with utility function $U \in U_\mathsf{timid}^\mathsf{mult}$.
\end{theorem}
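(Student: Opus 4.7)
The plan is to reduce the multi-adversary analysis to essentially the single-adversary case treated in Theorem~\ref{thm:sjst_single}. First, I would take the strategy profile $(\calB_1,\dots,\calB_\lambda)$ in which every $\calB_j$ is a random guessing $t_j$-adversary. Since no $\calB_j$ tampers with any message, the correctness of the SJST protocol yields $M_R = M_S$ with probability $1$, and perfect privacy of the SJST protocol yields $\Pr[\guess_j = 1] = 1/|\calM|$ for each $j$. Hence this profile witnesses the perfect security required in Definition~\ref{def:rsmtsec_mult}, and $U_j(\calB_j,\calB_{-j},U)=u_2'$ for every $j$.

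Next, I would verify the Nash equilibrium condition. Fix an index $j\in\{1,\dots,\lambda\}$ and an arbitrary $t_j$-adversary $\calA_j$, and analyse the game $\gamem(\Pi,\calA_j,\calB_{-j})$. The crucial observation is that every $\calB_k$ with $k\neq j$ is passive, so no tampering occurs on channels corrupted by those adversaries and therefore no ``DETECT at $i$'' message is ever triggered for $i$ in a channel corrupted by $\calB_k$. Consequently $\detect_k = 0$ for every $k\neq j$, and the only outcome coordinates that can vary are $\suc$, $\guess_j$, and $\detect_j$. Moreover, perfect privacy of SJST keeps $\Pr[\guess_j=1]=1/|\calM|$ regardless of $\calA_j$'s actions. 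Hence, along the deviation of $\calA_j$, only the utility values $u_1', u_2', u_3', u_4'$ (all conditioned on $\detect_{-j}=0$) are relevant, exactly the analogues of $u_1,u_2,u_3,u_4$ in the single-adversary proof.

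From here I would run the same accounting as in Theorem~\ref{thm:sjst_single}, but with $t$ replaced by $t_j$. As in that proof, I may assume without loss of generality that $\calA_j$ neither alters lengths of $(r_i,R_i)$ nor changes $r_i$ without changing $R_i$, so tampering occurs on exactly $t_j$ channels with $R_i'\neq R_i$. Defining the disjoint events $E_1,E_2,E_3$ (no tampering detected, partially detected, all detected) and using the per-channel non-detection probability $2^{1-\ell}$ from Appendix~\ref{sec:uh}, I get $\Pr[E_1]=2^{(1-\ell)t_j}$ and $\Pr[E_3]=(1-2^{1-\ell})^{t_j}$. Bounding the utility contribution of $E_1$ by $u_1'$, of $E_2$ by $u_3'$, and of $E_3$ by $u_4'$, I obtain
\begin{align*}
U_j(\calA_j,\calB_{-j},U) &\leq u_3' + (u_1'-u_3')\,2^{(1-\ell)t_j} - (u_3'-u_4')\bigl(1 - t_j\,2^{1-\ell}\bigr).
\end{align*}
The two conditions on $\ell$ in the theorem, applied with $t = t_j$, make the first correction term at most $\alpha$ and force the whole expression to be at most $u_2'$. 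Taking the maximum over $t\in\{t_1,\dots,t_\lambda\}$ ensures the bound simultaneously for every $j$, which completes the Nash equilibrium check.

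The main obstacle is conceptual rather than computational: one must argue cleanly that at the equilibrium candidate the deviation of a single $\calA_j$ cannot cause $\detect_k=1$ for $k\neq j$, so that the utility comparison reduces to the single-adversary quantities $u_1',u_2',u_3',u_4'$ (and the ``other-adversary-detection'' coordinate $u_3''$ plays no role in this proof). Once this reduction is made explicit, the detection-probability calculation is just the one from Theorem~\ref{thm:sjst_single} applied with the individual budget $t_j$, and the maximization over the $t_j$'s in the hypothesis on $\ell$ is exactly what is needed to cover all deviators.
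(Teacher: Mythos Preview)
Your proposal is correct and follows essentially the same approach as the paper's own proof: reduce the Nash-equilibrium check to the single-adversary analysis of Theorem~\ref{thm:sjst_single} by observing that, when all other adversaries play the random-guessing strategy, the deviator $\calA_j$ faces exactly the single-adversary situation with corruption budget $t_j$ and utilities $u_1',u_2',u_3',u_4'$ in place of $u_1,u_2,u_3,u_4$. Your write-up is in fact more explicit than the paper's (which simply asserts that ``the same argument as the proof of Theorem~\ref{thm:sjst_single} can apply''), in particular by spelling out that $\detect_k=0$ for all $k\neq j$ under the deviation and hence that $u_3''$ never enters the bound.
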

\begin{proof}
  We note that the same argument as the proof of Theorem~\ref{thm:sjst_single} can apply to the case of multiple adversaries.
  This is because even in the presence of multiple adversaries, as long as we consider a Nash equilibrium,
  each adversary $j$ tries to maximize the utility by choosing a strategy $\calA_j$
  by assuming that all the other adversaries follow the random guessing strategy profile $\calB_{-j}$.
  It is precisely the case analyzed in the proof of Theorem~\ref{thm:sjst_single}.
  Since the utility values $u'_1, u'_2, u'_3, u_4'$ corresponds to those of $u_1, u_2, u_3, u_4$, respectively, in Theorem~\ref{thm:sjst_single}, the statement follows.
\end{proof}

\subsection{Protocol for Minority Corruptions}\label{sec:minor}

We provide a non-interactive SMT protocol based on secret-sharing and pairwise independent hash functions.
See Section~\ref{sec:rss} and Appendix~\ref{sec:uh} for the definitions.
The protocol is secure against multiple adversaries who only corrupt minorities of the channels.
Namely, we assume that each adversary corrupts at most $\lfloor (n-1)/2 \rfloor$ channels.
Note that the protocol does not use the public channel as in the protocol in Section~\ref{sec:sjst}.

We describe the construction of our protocol.
The protocol can employ any secret-sharing scheme of threshold $\lfloor (n-1)/2 \rfloor$,
which may be Shamir's scheme. 
Let $(s_1, \dots, s_n)$ be the shares generated by the scheme from the message to be sent.
Then, pairwise independent hash functions $h_i$ are chosen for each $i \in \{1, \dots, n\}$.
For any $j \neq i$, $h_i(s_j)$ is computed as an authentication tag for $s_j$.
Then, $(s_i, h_i, \{h_i(s_j)\}_{j \neq i})$ will be sent through the $i$th channel.
When $s_i$ is modified to $s_i' \neq s_i$ by some adversary,
the modification can be detected by the property of
pairwise independent hash functions because the adversary cannot modify all tags $h_j(s_i)$ for $j \neq i$.
Also, a random mask $r_{i,j}$ is applied to $h_i(s_j)$ to conceal the information of $s_j$ in $h_i(s_j)$.
The masks $\{r_{j,i}\}_{j \neq i}$ for $s_i$ will be sent through the $i$th channel so that only the $i$th channel reveals the information of $s_i$.
Hence, the message sent through the $i$th channel is $(s_i,  h_i, \{ h_i(s_j) \oplus r_{i,j}\}_{j \neq i}, \{r_{j,i}\}_{j \neq i})$.
As long as each adversary corrupts minorities of the channels,
a single adversary cannot cause erroneous detection of silent adversaries.
We give a formal description in Figure~\ref{fig:ciss}.

\begin{figure}[t]
  \begin{framed}
Let $(\Share, \Reconst)$ be a secret-sharing scheme of threshold $\lfloor (n-1)/2 \rfloor$,
  where a secret is chosen from $\calM$, and the shares are defined over $\mathcal{V}$.
  Let $m \in \calM$ be the message to be sent by the sender, and
  $H = \{h \colon \mathcal{V} \rightarrow \{0, 1\}^\ell\}$ a class of pairwise independent hash functions. 
  \begin{enumerate}
  \item The sender does the following:
    Generate the shares $(s_1, \dots, s_n)$ by $\Share(m)$, and randomly choose $h_i \in H$ for each $i \in \{1,\dots,n\}$.
    Also, for every distinct $i,j \in \{1,\dots,n\}$, choose $r_{i,j} \in \bin^\ell$ uniformly at random,
    and then compute $T_{i,j} = h_i(s_j) \oplus r_{i,j}$.
    Then, for each $i \in \{1,\dots,n\}$,
    send $m_i = \left(s_i, h_i, \{T_{i,j}\}_{j \in \{1,\dots, n\} \setminus \{i\}}, \{r_{j,i}\}_{j \in \{1, \dots, n\} \setminus \{i\}}\right)$
    through the $i$th channel.
  \item After receiving 
    $\tilde{m}_i = \left(\tilde{s}_i, \tilde{h}_i, \{\tilde{T}_{i,j}\}_{j \in \{1,\dots, n\} \setminus \{i\}}, \{\tilde{r}_{j,i}\}_{j \in \{1, \dots, n\} \setminus \{i\}}\right)$
    on each channel $i \in \{1, \dots, n\}$, the receiver does the following:
    For every $i  \in \{1,\dots,n\}$, compute the list
    $L_i = \left\{ j \in \{1, \dots, n\} \colon \tilde{h}_i(\tilde{s}_j) \oplus \tilde{r}_{i,j} \neq \tilde{T}_{i,j}\right\}$.
    If a majority of the lists coincide with a list $L$, reconstruct the message $\tilde{m}$
    by $\Reconst(\{i, \tilde{s}_i\}_{i \in \{1,\dots,n\} \setminus L})$, send messages ``DETECT at $i$'' for every $i \in L$, and output $\tilde{m}$.
    Otherwise, output $\bot$.
  \end{enumerate}
  \end{framed}
  \caption{\prot\label{pro:ciss} for Minority Corruption} \label{fig:ciss}
\end{figure}

\begin{theorem}\label{thm:ciss}
  For any  $\lambda \geq 2$, let $t_1, \dots, t_\lambda$ be integers
  satisfying $t_1 + \dots + t_\lambda \leq n$ and $1\leq t_i \leq \lfloor (n-1)/2 \rfloor$ for every $i \in \{1, \dots, \lambda\}$.
  If the parameter $\ell$ in Protocol~\ref{pro:ciss} satisfies
  \begin{equation*}
    \ell \geq \log_2\frac{u'_1-u'_4}{u'_2-u'_4} + 2 \log_2(n+1) -1,
  \end{equation*}
  then the protocol is perfectly secure against rational $(t_1, \dots, t_\lambda)$-adversaries with utility function $U \in U_\mathsf{timid}^\mathsf{mult}$.
\end{theorem}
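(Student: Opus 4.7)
The plan is to take $\calB_1,\dots,\calB_\lambda$ to be the random-guessing adversaries: each $\calB_j$ leaves its channels untouched and outputs a uniformly random $M_j\in\calM$. The perfect-security condition is then immediate from the correctness of $\Reconst$ and the perfect privacy of the secret-sharing scheme, because each $\calB_j$ sees at most $t_j\leq\lfloor(n-1)/2\rfloor$ shares and every tag $T_{i,k}$ is one-time-padded by the uniform, independent mask $r_{i,k}$, which is sent on a different channel. Under this profile $\suc=1$ always and no DETECT message is ever issued, so $U_j(\calB_j,\calB_{-j})=u_2'$ for every $j$.

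For the Nash condition I fix $j$ and an arbitrary $t_j$-adversary $\calA_j$ and aim to prove $U_j(\calA_j,\calB_{-j})\leq u_2'$. Since $\calB_{-j}$ is passive, only channels in $C_j$ carry possibly-altered messages; one checks directly from the definition of $L_i$ that no index outside $C_j$ can ever appear in any $L_i$, so $\detect_k=0$ for every $k\neq j$. Perfect privacy still yields $\Pr[\guess_j=1]=1/|\calM|$, so only the four utility values $u_1',u_2',u_3',u_4'$ can arise.

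The core of the argument is a case split on the set $T:=\{k\in C_j:\tilde{s}_k\neq s_k\}$ of shares actually modified by $\calA_j$. If $T=\emptyset$, the receiver's reconstruction always recovers $m$ regardless of any mask/tag tampering, so $\suc=1$ and the utility is at most $u_2'$. If $T\neq\emptyset$ I appeal to the pairwise independence of $H$: for every honest $i\notin C_j$ and every $k\in T$, the quantity $h_i(s_k)\oplus h_i(\tilde{s}_k)$ is uniform in $\{0,1\}^\ell$ and independent of the adversary's view (which contains neither $h_i$ nor $r_{i,k}$ for such $i$), so the conditional probability that $k\notin L_i$ equals $2^{-\ell}$. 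A union bound over the at most $(n-t_j)|T|\leq(n+1)^2/2$ pairs $(i,k)$ shows that, except with probability at most $(n+1)^2\cdot 2^{-\ell-1}$, every honest $L_i$ contains $T$; since the honest channels form a strict majority, the majority decision places all of $T$ in $L$, and then $\Reconst$ uses only untampered shares to output $m$ while every channel in $T\subseteq C_j$ triggers a DETECT message. On this good event the utility is $u_4'$; on the complementary bad event it is at most $u_1'$. Combining,
\[ U_j(\calA_j,\calB_{-j})\;\leq\;u_4'+(n+1)^2\cdot 2^{-\ell-1}\,(u_1'-u_4'), \]
which is exactly $\leq u_2'$ under the stated bound $\ell\geq\log_2\frac{u_1'-u_4'}{u_2'-u_4'}+2\log_2(n+1)-1$.

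The step I expect to be the hardest is absorbing \emph{all} of $\calA_j$'s freedom into this single pairwise-independence union bound. The adversary can simultaneously tamper with shares, masks, tags, and hash functions on its $t_j$ channels, and an obvious worry is that inconsistent mask manipulations might cause different honest $L_i$'s to accuse different subsets of $C_j$, thereby defeating the majority vote and forcing the receiver to produce no DETECT. Showing that this cannot help the adversary---that every such attack either keeps the good event intact or falls within the same $(n+1)^2\cdot 2^{-\ell-1}$ probability bound---is the delicate combinatorial part of the argument and will require a careful case analysis of how $L_i$ depends on the adversary's per-channel choices together with the independence of the $h_i$'s across honest $i$.
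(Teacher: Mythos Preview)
Your overall plan coincides with the paper's proof: both take $(\calB_1,\dots,\calB_\lambda)$ to be random guessing, obtain perfect security from the threshold privacy plus the one-time-pad masks, argue that a deviator cannot frame another adversary's channel through the majority vote, and bound the utility of a share-tampering deviator by the pairwise-independence union bound, arriving at exactly the inequality $(n+1)^2\,2^{-(\ell+1)}(u_1'-u_4')+u_4'\le u_2'$. The only numerical difference is that the paper quotes $2^{1-\ell}$ per check (via the Wegman--Carter family of Appendix~\ref{sec:uh}) instead of your $2^{-\ell}$; both reach the same $(n+1)^2\cdot 2^{-(\ell+1)}$ after slightly different counting.

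The worry you isolate in your last paragraph is genuine, and in fact it already undermines your treatment of the case $T=\emptyset$. Take $n=5$ and $C_j=\{1,2\}$; the deviator leaves all shares, hash functions, and tags unchanged but replaces the masks $r_{3,1},r_{4,2},r_{5,1},r_{5,2}$ (all of which ride on channels $1$ or $2$) by fresh values. Then deterministically $L_1=L_2=\emptyset$, $L_3=\{1\}$, $L_4=\{2\}$, $L_5=\{1,2\}$; no single list is held by a strict majority, the receiver outputs $\bot$ and issues no DETECT, and the deviator obtains utility $u_1'>u_2'$. This attack is \emph{deterministic}, so it cannot be absorbed into the $(n+1)^2\cdot 2^{-\ell-1}$ bound as you propose; the same mask trick also spoils your ``good event'' when $T\neq\emptyset$, since even if every honest $L_i$ contains $T$, extra indices planted in selected honest lists prevent any exact-list majority from forming. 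The paper's proof glosses over this point as well---it simply asserts that the tampered indices ``coincide with a majority of the list''---so the gap is shared; closing it appears to require reading the protocol's ``majority'' as a per-coordinate vote (under which the attack above yields $L=\emptyset$ and $\suc=1$) rather than exact-list agreement.
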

\begin{proof}
  For $k \in \{1, \dots, \lambda\}$, let $\calB_{k}$ be a random guessing $t_k$-adversary.
  First, note that, for any $i \in \{1, \dots, n\}$, the information of $s_i$ can be obtained only by $m_i$, the message sent over the $i$th channel. 
  This is because for any $j \neq i$, $h_j(s_i)$ is masked as $h_j(s_i) \oplus r_{i,j}$,
  and the random mask $r_{i,j}$ is included only in $m_i$. 
  Also, each $s_i$ is a share of the secret sharing of threshold $\lfloor (n-1)/2 \rfloor$.
  Since $\calB_k$ can obtain at most $\lfloor (n-1)/2 \rfloor$ shares,
  $\calB_k$ can learn nothing about the message sent from the sender.
    Thus, the perfect security is achieved for $(\calB_1, \dots, \calB_\lambda)$.
  
  Next, we show that $(\calB_1, \dots, \calB_\lambda)$ is a Nash equilibrium.
  For $k \in \{1, \dots, \lambda\}$, let $\calA_k$ be any $t_k$-adversary.
  Since $U_k(\calB_1, \dots, \calB_\lambda) = u'_2$, to increase the utility, $\calA_k$ needs to get either (a) $\suc = 0$, or
  (b) $\detect_k=0$ and $\detect_{k'}=1$ for some $k' \neq k$.
  
  For the case of (a), $\calA_k$ tries to change $s_i$ into $\tilde{s}_i \neq s_i$ for some $i \in \{1, \dots, n\}$.
  Since $\calA_k$ does not corrupt some $i' \in \{1, \dots, n\}$,
  the index $i$ corrupted by $\calA_k$ will be included in the list $L_{i'}$ unless $h_{i'}(\tilde{s}_i) \oplus \tilde{r}_{i',i} = T_{i',i}$.
  Note that $\tilde{s}_i$ and $\tilde{r}_{i',i}$ are included in $\tilde{m}_i$, and thus can be changed,
  but $h_{i'}$ and $T_{i',i}$ are in $\tilde{m}_{i'}$, and thus have been unchanged.
  It follows from the property of pairwise independent hash functions that
  this can happen with probability $2^{1- \ell}$ assuming $\tilde{s}_i \neq s_i$.
  Thus, $i$ will be included in $L_{i'}$ with probability at least $1 - 2^{1-\ell}$.
  Since there are at least $n - \lfloor (n-1)/2 \rfloor = \lceil (n+1)/2 \rceil$ such indices $i'$,
  the probability that a majority of the lists contains $i$ is at least $1 - \lceil (n+1)/2 \rceil \cdot 2^{1-\ell}$.
  Note that $\calA_k$ may corrupt $\lfloor (n-1)/2 \rfloor$ channels in total.
  The probability that all the corrupted indices coincide with a majority of  the list is at least
  $1 - \lfloor (n-1)/2 \rfloor \cdot \lceil (n+1)/2 \rceil \cdot 2^{1-\ell} \geq 1 - (n+1)^2 \cdot 2^{-(\ell+1)}$.
  In that case, the message can be reconstructed by other shares, and thus we have $\suc = 1$, $\detect_k = 1$, and $\detect_{k'} = 0$ for $k' \neq k$,
  resulting in the utility of $u'_4$. Since $\calA_k$ only corrupts a minority of the channels,
  it cannot cause $\detect_{k'}=1$ for $k'\neq k$.
  Thus, the maximum utility of $\calA_k$ is $u'_1$.
  Thus, the utility of adversary $k$ when tampering as $\tilde{s}_i \neq s_i$ is at most
  \[ U_k(\calA_k, \calB_{-k}) \leq (n+1)^2\cdot 2^{-(\ell+1)} \cdot u'_1 + \left( 1 - (n+1)^2\cdot 2^{-(\ell+1)}\right) \cdot u'_4,\]
  which is at most $u'_2$ by the assumption on $\ell$.

  For the case of (b), $\calA_k$ needs to generate the corrupted message $\tilde{m}_i$ for the $i$th channel so that
  for a majority of indices $j \in \{1,\dots,n\}$, $\tilde{h}_i(s_j) \oplus r_{i,j} \neq \tilde{T}_{i,j}$, where each $j$ is corrupted by $\calB_{k'}$ with $k' \neq k$,
  and thus $r_{i,j}$ and $s_j$ are not tampered with.
  Since $\calA_k$ only corrupts a minority of the channels, this cannot happen.
  
  Therefore, $(\calB_1, \dots, \calB_\lambda)$ is a Nash equilibrium.
\end{proof}

Note that the single-adversary setting of Section~\ref{sec:singlegame} can be seen as a special case of
the multiple-adversary setting.
Namely, it is equivalent to the setting in which there are two adversaries, $\calA_1$ and $\calA_2$, such that
  $\calA_1$ tries to violate the security requirements of SMT by corrupting at most $t \leq n-1$ channels,
  whereas $\calA_2$, who corrupt $n-t \geq 1$ channels, does nothing for the protocol.
  Since Protocol~\ref{pro:ciss} does not rely on the additional utility of $u_3''$ in the security analysis,
  it also gives an SMT protocol in the single-adversary setting.
  \begin{corollary}\label{cor:ciss}
      If the parameter $\ell$ in Protocol~\ref{pro:ciss} satisfies
    \begin{equation*}
      \ell \geq \log_2\frac{u_1-u_4}{u_2-u_4} + 2 \log_2(n+1) -1,
    \end{equation*}
    then the protocol is perfectly secure against a rational $t$-adversary with utility function $U \in U_\mathsf{timid}$ for any $t < n/2$.
  \end{corollary}
  
\subsection{Protocol for Majority Corruptions}

We present a protocol against adversaries who may corrupt a majority of the channels.
We assume that adversaries are \emph{strictly} timid in this setting.
The protocol is a minor modification of the protocol for minority corruption.
In Protocol~\ref{pro:ciss}, the lists $L_i$ of the corrupted channels are generated for each channel,
and the final list $L$ is determined by the majority voting.
Thus, if an adversary corrupts a majority of the channels, the result of the majority voting can be easily forged,
and hence the protocol does not work for majority corruption.

To cope with majority corruptions, we modify the protocol such that
(1) the threshold of the secret sharing is changed from $\lfloor (n-1)/2 \rfloor$ to
$n-1$,
(2) the list $L_i$ contains both $i$ and $j$ if the masked tag $h_j(s_i) \oplus r_{i,j}$ does not match $T_{i,j}$,
and (3) the final list $L$ of the corrupted channels is composed of the union of all the sets $L_i$, namely, $L = L_1 \cup \dots \cup L_n$.
The threshold of $n-1$ can be achieved by Shamir's scheme.
Intuitively, this protocol works for strictly timid adversaries
because if some adversary tampers with messages over the $i$th channel,
the tampering will be detected with high probability, and in that case, $i$ must be included in the final list $L$. 
Since strictly timid adversaries prefer his tampering 
not to be detected, they will keep silent.
We give a formal description of the protocol in Figure~\ref{fig:ss-timid}.

\begin{figure}[t]
  \begin{framed}
  Let $(\Share, \Reconst)$ be a secret-sharing scheme of threshold $n-1$,
  where a secret is chosen from $\calM$, and the shares are defined over $\mathcal{V}$.
  Let $m \in \calM$ be the message to be sent by the sender, and
  $H = \{h \colon \mathcal{V} \rightarrow \{0, 1\}^\ell\}$ a class of pairwise independent hash functions. 
  \begin{enumerate}
  \item The sender does the following:
    Generate the shares $(s_1, \dots, s_n)$ by $\Share(m)$, and randomly choose $h_i \in H$ for each $i \in \{1,\dots,n\}$.
    Also, for every distinct $i,j \in \{1,\dots,n\}$, choose $r_{i,j} \in \bin^\ell$ uniformly at random,
    and then compute $T_{i,j} = h_i(s_j) \oplus r_{i,j}$.
    Then, for each $i \in \{1,\dots,n\}$,
    send $m_i = \left(s_i, h_i, \{T_{i,j}\}_{j \in \{1,\dots, n\} \setminus \{i\}}, \{r_{j,i}\}_{j \in \{1, \dots, n\} \setminus \{i\}}\right)$
    through the $i$th channel.
  \item After receiving 
    $\tilde{m}_i = \left(\tilde{s}_i, \tilde{h}_i, \{\tilde{T}_{i,j}\}_{j \in \{1,\dots, n\} \setminus \{i\}}, \{\tilde{r}_{j,i}\}_{j \in \{1, \dots, n\} \setminus \{i\}}\right)$
    on each channel $i \in \{1, \dots, n\}$, the receiver does the following:
    For every $i  \in \{1,\dots,n\}$, compute the list
    $L_i = \{ i \} \cup \left\{ j \in \{1, \dots, n\} \colon \tilde{h}_i(\tilde{s}_j) \oplus \tilde{r}_{i,j} \neq \tilde{T}_{i,j}\right\}$.
    Then, set $L = L_1 \cup \dots \cup L_n$.
    If $L = \emptyset$, reconstruct the message $\tilde{m}$ by $\Reconst(\{i, \tilde{s}_i\}_{i \in \{1,\dots,n\}})$,
    and output $\tilde{m}$.
    Otherwise, send messages ``DETECT at $i$'' for every $i \in L$, and output $\bot$ as the failure symbol.
  \end{enumerate}
  \end{framed}
  \caption{\prot\label{pro:ss-timid} for Majority Corruption} \label{fig:ss-timid}
\end{figure}

\begin{theorem}\label{thm:ciss2}
  For any  $\lambda \geq 2$, let $t_1, \dots, t_\lambda$ be integers
  satisfying $t_1 + \dots + t_\lambda \leq n$ and $1 \leq t_i \leq n-1$ for every $i \in \{1, \dots, \lambda\}$.
  If the parameter $\ell$ in Protocol~\ref{pro:ss-timid} satisfies
  \begin{equation*}
    \ell \geq \log_2\frac{u'_1-u''_3}{u'_2-u''_3} -1,
  \end{equation*}
  then the protocol is perfectly secure against rational $(t_1, \dots, t_\lambda)$-adversaries with utility function $U \in U_\mathsf{st\mhyphen timid}^\mathsf{mult}$.
\end{theorem}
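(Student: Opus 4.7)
The plan is to mirror the approach of Theorem~\ref{thm:ciss}: take the random guessing strategy profile $(\calB_1,\dots,\calB_\lambda)$ as the candidate Nash equilibrium and exploit the strict-timidness ordering $u'_1 > u'_2 > u''_3 > u'_3 > u'_4$ to sharply limit which deviations can be profitable. Perfect security comes straight from the building blocks: the secret sharing of threshold $n-1$ makes any $t_k\leq n-1$ shares independent of the message, and each masked tag $h_i(s_j)\oplus r_{i,j}$ reveals nothing about $s_j$ on any channel other than channel $j$ (which is where $r_{i,j}$ travels). When no party tampers every list $L_j$ is empty, so the receiver reconstructs $\tilde m=m$ with $\suc=1$ and $\detect_k=\detect_{-k}=0$, giving each adversary utility $u'_2$.

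For the Nash condition I would fix an adversary $k$ and an arbitrary deviation $\calA_k$ corrupting a channel set of size at most $t_k$, and argue $U_k(\calA_k,\calB_{-k},U)\leq u'_2$. The only outcome with utility strictly above $u'_2$ is $(\suc,\detect_k,\detect_{-k})=(0,0,0)$, worth $u'_1$, and one must also rule out the a priori profitable unlisted outcomes $(0,0,1)$ and $(1,0,1)$. The structural reason they are unreachable is the union-of-lists design of Protocol~\ref{pro:ss-timid}: whenever the $(i,j)$ consistency check fails, both $i$ and $j$ are put into $L_i$ and hence into $L=L_1\cup\dots\cup L_n$. A short case analysis, on whether $\calA_k$ modifies a share, a tag field, a hash-function field, or a mask, and on whether the resulting failed check involves an index corrupted by $\calA_k$ or not, shows that any event $L\neq\emptyset$ forces some channel of $\calA_k$ into $L$, so $\detect_k=1$. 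Consequently $L\neq\emptyset$ caps the utility at $\max\{u'_3,u''_3\}=u''_3$, and $\suc=1$ is only possible when $L=\emptyset$ and no share was altered, yielding exactly $u'_2$.

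The remaining case is $L=\emptyset$ together with some $\tilde s_{i^*}\neq s_{i^*}$ on a channel $i^*$ controlled by $\calA_k$, which gives $u'_1$ because Shamir interpolation through $n$ evaluations of a degree-$(n-1)$ polynomial turns any single-share change into a different secret. I would bound the probability of this favorable event as follows: for $L=\emptyset$ every $L_j$ with $j$ outside $\calA_k$'s corrupted set must be empty, and at least one such $j$ exists since $t_k\leq n-1$. Because $m_j$ is delivered unaltered by $\calB_{-k}$, the check on $(j,i^*)$ reduces to $h_j(\tilde s_{i^*})=h_j(s_{i^*})$, which by pairwise independence has probability exactly $2^{-\ell}$. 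Independence of the $h_j$'s across honest $j$'s yields $\Pr[L=\emptyset\mid \tilde s_{i^*}\neq s_{i^*}]\leq 2^{-\ell(n-t_k)}\leq 2^{-\ell}$. Combining, $U_k(\calA_k,\calB_{-k},U)\leq u''_3+2^{-\ell}(u'_1-u''_3)$, which is $\leq u'_2$ by the hypothesis on $\ell$ after rearrangement.

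The main obstacle is the second paragraph: verifying that the union-of-lists design really does prevent $\calA_k$ from planting only other adversaries' indices in $L$ while keeping their own out. One must carefully bookkeep which fields of each $\tilde m_{j'}$ lie in $\calA_k$'s view (they forge $\tilde h_{j'}$, the tags $\tilde T_{j',\cdot}$, and the masks $\tilde r_{\cdot,j'}$ for any $j'$ in their own corrupted set, but the mask $r_{j',i}$ used by an honest channel $j'$ to check an uncorrupted share $s_i$ sits inside $m_i$ and is invisible to $\calA_k$ when $i$ lies outside their corrupted set), and check that in each case a fabricated mismatch forces some corrupted index into $L$. This is the conceptual reason Protocol~\ref{pro:ss-timid} can tolerate up to $t_k=n-1$ corruptions per adversary, whereas the majority-vote variant Protocol~\ref{pro:ciss} cannot.
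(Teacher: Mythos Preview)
Your proposal follows essentially the same route as the paper: take the random guessing profile as the candidate equilibrium, obtain perfect security from the threshold-$(n-1)$ sharing plus the masking, and for the Nash condition split into (a) altering a share and (b) trying to frame another adversary without being flagged. Your structural argument that the union-of-lists rule forces $\detect_k=1$ whenever $L\neq\emptyset$ is exactly the content of the paper's one-line dismissal of case~(b), spelled out in more detail.

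One small slip: in your probability bound you write that the check on $(j,i^*)$ ``reduces to $h_j(\tilde s_{i^*})=h_j(s_{i^*})$''. This forgets that the mask $r_{j,i^*}$ is carried on channel~$i^*$ (look at the packet $m_{i^*}=(s_{i^*},h_{i^*},\{T_{i^*,\cdot}\},\{r_{\cdot,i^*}\})$), so $\calA_k$ can replace it by $\tilde r_{j,i^*}\neq r_{j,i^*}$. The honest check at $j$ is therefore $h_j(\tilde s_{i^*})\oplus\tilde r_{j,i^*}=h_j(s_{i^*})\oplus r_{j,i^*}$, and the relevant bound is the one from Lemma~\ref{lem:sjst}/equation~(\ref{eq:wc}), namely $2^{1-\ell}$ rather than $2^{-\ell}$. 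You even note in your final paragraph that the masks $\tilde r_{\cdot,j'}$ for corrupted $j'$ are forgeable, so this is just an internal inconsistency; fixing it only shifts the constant in the condition on $\ell$.
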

\begin{proof}
  For $k \in \{1, \dots, \lambda\}$, let $\calB_k$ be a random guessing $t_k$-adversary.
  By the same reason as in the proof of Theorem~\ref{thm:ciss}, the protocol is perfectly secure against $(\calB_1, \dots, \calB_\lambda)$.

  Next, we  show that $(\calB_1, \dots, \calB_\lambda)$ is a Nash equilibrium.
  Let $\calA_k$ be any $t_k$-adversary for $k \in \{1,\dots,\lambda\}$.
  As in the proof of Theorem~\ref{thm:ciss}, $\calA_k$ needs to yield either (a) $\suc = 0$, or
  (b) $\detect_k=0$ and $\detect_{k'}=1$ for some $k' \neq k$.
  For the case of (a), $\calA_k$ needs to corrupt the $i$th channel so that $\tilde{s}_i \neq s_i$. 
  There is at least one index $i' \in \{1,\dots,n\}$ that is not corrupted by $\calA_k$. 
  Thus, by the property of pairwise independent hash functions, the index $i$ is included in the list $L_{i'}$  with probability at least $1-2^{1-\ell}$,
  in which case the utility of $\calA_k$ is at most $u''_3$.
  Hence, the expected utility is at most 
  \[ U_k(\calA_k, \calB_{-k}) \leq 2^{-(\ell+1)} \cdot u'_1 + \left( 1 - 2^{-(\ell+1)}\right) \cdot u''_3,\]
  which is at most $u'_2$ by assumption.
  For the case of (b), if some index $i$ is in the final list $L$ by the tampering by $\calA_k$,
  then some channel $i' \neq i$ corrupted by $\calA_k$ is also included in $L$.
  Thus, (b) cannot happen.
  Therefore, $(\calB_1, \dots, \calB_\lambda)$ is a Nash equilibrium.
\end{proof}

\section{SMT against Malicious and Rational Adversaries}\label{sec:mixed}

In the previous sections, we have discussed SMT against rational adversaries.
We have assumed that all the adversaries behave rationally.
The assumption may be strong because all of them can be characterized by the utility function we defined.
This section discusses more realistic situations where adversaries may not behave rationally but maliciously.

\subsection{Security Model}

Without loss of generality, we assume that there are $\lambda \geq 2$ adversaries, and adversaries $1, \dots, \lambda-1$ are rational,
and adversary $\lambda$ behaves maliciously.
We use the same definitions of the SMT game and the utility function in Section~\ref{sec:multgame}.
We define \emph{robust} security against rational adversaries.
A similar definition appeared in the context of rational secret sharing~\cite{ADGH06}.
For strategies $\calB_1, \dots, \calB_{\lambda-1}, \calA_\lambda, \calA_j$ for $j \in \{1,\dots,\lambda-1\}$,
we denote by $(\calA_j,\calB_{-j},\calA_\lambda)$ the strategy profile $(\calB_1, \dots, \calB_{j-1}, \calA_j, \calB_{j+1}, \dots, \calB_{\lambda-1}, \calA_\lambda)$.
\begin{definition}[Robust PSMT against Rational Adversaries]\label{def:r-rsmtsec}
  An SMT protocol $\Pi$ is \emph{$t^*$-robust perfectly secure against rational $(t_1, \dots, t_{\lambda-1})$-adversaries with utility function $U$} if
  there are $t_j$-adversary $\calB_j$  for $j \in \{1, \dots, \lambda-1\}$ such that for any $t_j$-adversary $\calA_j$ for $j \in \{1, \dots, \lambda-1\}$
  and $t^*$-adversary $\calA_\lambda$,
  \begin{enumerate}
  \item \emph{Perfect security}: $\Pi$ is $(0,0)$-SMT against $(\calB_1, \dots, \calB_{\lambda-1},\calA_\lambda)$, and
  \item \emph{Robust Nash equilibrium}: $U_j(\calA_j,\calB_{-j},\calA_\lambda,U) \leq U_j(\calB_j,\calB_{-j},\calA_\lambda,U)$ for every $j \in \{1,\dots,\lambda-1\}$ in the SMT game.
  \end{enumerate}
\end{definition}

Compared to Definition~\ref{def:rsmtsec_mult}, robust PSMT requires that the perfect security is achieved even in the presence of a malicious adversary $\calA_\lambda$,
and a strategy profile $(\calB_1, \dots, \calB_{\lambda-1},\calA_\lambda)$ is a Nash equilibrium for adversary $j \in \{1, \dots, \lambda-1\}$.

\subsection{Protocol against Malicious and Rational Adversaries}

We show that a robust PSMT protocol can be constructed based on the protocol for minority corruption in Section~\ref{sec:minor}.
For $t^*$-robust against $(t_1, \dots, t_{\lambda-1})$-adversaries, we assume that $t^* \leq \lfloor (n-1)/3 \rfloor$ and
$1 \leq t_j \leq \min\{ \lfloor (n-1)/2 \rfloor - t^*, (n-1)/3 \rfloor\}$
for each $j \in \{1, \dots, \lambda-1\}$.
Our non-interactive protocol is obtained simply by modifying the threshold of the secret sharing in Protocol~\ref{pro:ciss} 
from $\lfloor (n-1)/2 \rfloor$ to $\lfloor (n-1)/3 \rfloor$.
This protocol works because when only a malicious adversary corrupts at most $\lfloor (n-1)/3 \rfloor$ channels,
the transmission failure does not occur due to the error-correction property of the secret sharing.
Thus, perfect security is achieved in the presence of a malicious adversary.
Even if some rational adversary deviates from the protocol together with a malicious adversary,
they can affect at most $t_j + t^* \leq \lfloor (n-1)/2 \rfloor$ votes.
Thus, the majority voting can identify any tampering with high probability.

The formal description is given in Figure~\ref{fig:ss-robust}.
\begin{figure}[t]
  \begin{framed}
  Let $(\Share, \Reconst)$ be a secret-sharing scheme of threshold $\lfloor (n-1)/3 \rfloor$,
  where a secret is chosen from $\calM$, the shares are defined over $\mathcal{V}$,
  and the secret can be reconstructed as long as at most $\lfloor (n-1)/3 \rfloor$ out of $n$ shares are tampered.
  Let $m \in \calM$ be the message to be sent by the sender, and
  $H = \{h \colon \mathcal{V} \rightarrow \{0, 1\}^\ell\}$ a class of pairwise independent hash functions. 
  \begin{enumerate}
  \item The sender does the following:
    Generate the shares $(s_1, \dots, s_n)$ by $\Share(m)$, and randomly choose $h_i \in H$ for each $i \in \{1,\dots,n\}$.
    For every distinct $i,j \in \{1,\dots,n\}$, choose $r_{i,j} \in \bin^\ell$ uniformly at random,
    and then compute $T_{i,j} = h_i(s_j) \oplus r_{i,j}$.
    For each $i \in \{1,\dots,n\}$,
    send $m_i = \left(s_i, h_i, \{T_{i,j}\}_{j \in \{1,\dots, n\} \setminus \{i\}}, \{r_{j,i}\}_{j \in \{1, \dots, n\} \setminus \{i\}}\right)$ through the $i$th channel.
  \item After receiving 
    $\tilde{m}_i = \left(\tilde{s}_i, \tilde{h}_i, \{\tilde{T}_{i,j}\}_{j \in \{1,\dots, n\} \setminus \{i\}}, \{\tilde{r}_{j,i}\}_{j \in \{1, \dots, n\} \setminus \{i\}}\right)$
    on each channel $i \in \{1, \dots, n\}$, the receiver does the following:
    For every $i \in \{1, \dots, n\}$, compute the list 
    $L_i = \left\{ j \in \{1, \dots, n\} \colon \tilde{h}_i(\tilde{s}_j) \oplus \tilde{r}_{i,j} \neq \tilde{T}_{i,j}\right\}$. 
    If a majority of the lists coincide with a list $L$, 
    reconstruct the message $\tilde{m}$ by $\Reconst(\{i, \tilde{s}_i\}_{i \in \{1,\dots,n\}})$, send message ``DETECT at $i$'' for every $i \in L$,
    and output $\tilde{m}$.
    Otherwise, output $\bot$.
  \end{enumerate}
\end{framed}
  \caption{\prot\label{pro:ss-robust}} \label{fig:ss-robust}
\end{figure}

For the security analysis, we define the values of the utility of adversary $j \in \{1,\dots,\lambda-1\}$ such that
\begin{itemize}
\item $u_1''$ is the utility in the same case as $u_1'$ except that $\detect_\lambda = 1$,
\item $u_2''$ is the utility in the same case as $u_2'$ except that $\detect_\lambda = 1$, and
\item $u_4''$ is the utility in the same case as $u_4'$ except that $\detect_\lambda = 1$.
\end{itemize}
The values $u_1', u_2', u_4'$ are defined as the case that $\detect_{j'}=0$ for every $j' \in \{1,\dots,\lambda\} \setminus \{j\}$.
In the above, the values $u_1'', u_2'', u_4''$ are defined as $\detect_{j'}=0$ for every $j' \in \{1,\dots,\lambda-1\} \setminus \{j\}$ and $\detect_\lambda = 1$.

\begin{theorem}\label{thm:ss-robust}
  For any  $\lambda \geq 2$, let $t_1, \dots, t_{\lambda-1}, t^*$ be integers
  satisfying $t_1 + \dots + t_{\lambda-1}+t^* \leq n$, $0 \leq t^* \leq \lfloor (n-1)/3 \rfloor$,
  and $1\leq t_i \leq \min\{ \lfloor (n-1)/2 \rfloor - t^*, \lfloor (n-1)/3 \rfloor\}$ for every $i \in \{1, \dots, \lambda-1\}$.
  If the parameter $\ell$ in Protocol~\ref{pro:ss-robust} satisfies
  \begin{equation*}
    \ell \geq \max_{(u_1^*,u_2^*,u_4^*) \in \{(u_1',u_2',u_4'),(u_1'',u_2'',u_4'')\}}\left\{ \log_2\frac{u_1^*-u_4^*}{u_2^*-u_4^*} + 2 \log_2(n+1) -1 \right\},
  \end{equation*}
  then the protocol is $t^*$-robust perfectly secure against rational $(t_1, \dots, t_{\lambda-1})$-adversaries with utility function $U \in U_\mathsf{timid}^\mathsf{mult}$.
\end{theorem}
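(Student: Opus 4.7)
The plan is to mirror the argument of Theorem~\ref{thm:ciss}, but with two extra ingredients: (i)~exploit the stronger threshold $\lfloor(n-1)/3\rfloor$ of the underlying secret-sharing scheme to absorb the malicious adversary $\calA_\lambda$, and (ii)~perform a case split on whether $\calA_\lambda$'s tampering is ultimately detected, since this shifts the ``baseline'' utility of the rational adversary $j$ from $u_2'$ to $u_2''$.

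First, I would let $\calB_1,\dots,\calB_{\lambda-1}$ be the random-guessing strategies and establish the perfect security of $(\calB_1,\dots,\calB_{\lambda-1},\calA_\lambda)$ for an arbitrary $t^*$-adversary $\calA_\lambda$. Perfect privacy follows exactly as in Theorem~\ref{thm:ciss}: the mask $r_{i,j}$ is transmitted only on channel $i$, so any coalition that does not include channel $i$ learns nothing about $s_i$, and since $\calA_\lambda$ sees at most $t^*\le\lfloor(n-1)/3\rfloor$ shares, the secret-sharing privacy yields $\Pr[\guess_j=1]=1/|\calM|$. Perfect reliability follows from the error-correcting property stated in Protocol~\ref{pro:ss-robust}: since $\calA_\lambda$ alters at most $t^*\le\lfloor(n-1)/3\rfloor$ shares and the rational adversaries do not tamper under $\calB_{-j}$, $\Reconst$ recovers $m$ with probability $1$.

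Next, I would compute $U_j(\calB_j,\calB_{-j},\calA_\lambda)$ by splitting on $\calA_\lambda$'s behavior. If $\calA_\lambda$ does not tamper (or no tampering is flagged in $L$), then $\suc=1$ and all $\detect$-bits are zero, so the utility equals $u_2'$. Otherwise, the honest channels' lists $L_{i'}$ each contain the tampered indices except with probability $2^{1-\ell}$ per tag, so by majority voting $\detect_\lambda=1$ and $\detect_j=0$, yielding utility $u_2''$. Either way, perfect reliability still holds by error correction, so the baseline for a deviating adversary $\calA_j$ is one of $u_2'$ or $u_2''$, depending on $\calA_\lambda$.

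The core of the proof is to bound $U_j(\calA_j,\calB_{-j},\calA_\lambda)$ when $\calA_j$ deviates. As in Theorem~\ref{thm:ciss}, $\calA_j$ can only hope to gain by (a)~causing $\suc=0$, or (b)~achieving $\detect_j=0$ while $\detect_k=1$ for some $k\ne j$. For (a), any altered share $\tilde s_i\ne s_i$ is witnessed by each of the at least $n-t_j-t^*\ge\lceil(n+1)/2\rceil$ honest channels with probability $\ge 1-2^{1-\ell}$; a union bound over at most $t_j\le\lfloor(n-1)/2\rfloor$ corrupted indices and $\lceil(n+1)/2\rceil$ honest witnesses gives the same $(n+1)^2\cdot 2^{-(\ell+1)}$ factor as in Theorem~\ref{thm:ciss}, after which $\suc=1$ and $\detect_j=1$, producing utility at most $u_4^*$ (where the star takes the value $u_4'$ or $u_4''$ corresponding to $\calA_\lambda$'s activity); the residual bad event yields utility at most $u_1^*$. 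The resulting bound
\[
U_j(\calA_j,\calB_{-j},\calA_\lambda)\le (n+1)^2\cdot 2^{-(\ell+1)}\cdot u_1^* + \bigl(1-(n+1)^2\cdot 2^{-(\ell+1)}\bigr)\cdot u_4^*
\]
is at most $u_2^*$ by the hypothesis on $\ell$, in either of the two cases. For (b), $\calA_j$ would need a majority of the lists to contain an index not corrupted by $\calA_j$, but since $t_j\le\lfloor(n-1)/2\rfloor$ (even combined with $\calA_\lambda$, for framing purposes the frame must target channel $i'$ outside both their corruption sets, and the corresponding $r_{\cdot,i'}$ and $h_{i'}(\cdot)$ values are unknown to $\calA_j$), the pairwise-independent-hash argument rules this out. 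The hardest step, and the main obstacle, is formalizing this second case in the presence of $\calA_\lambda$: we must argue that the malicious adversary cannot be ``blamed'' by $\calA_j$ for untouched channels, which reduces to ensuring that the majority of list values for an honest channel are unpredictable to $\calA_j$ without the corresponding $r_{\cdot,\cdot}$ masks. Taking the maximum of the two bounds over $(u_1^*,u_2^*,u_4^*)\in\{(u_1',u_2',u_4'),(u_1'',u_2'',u_4'')\}$ gives the stated condition on $\ell$ and completes the proof.
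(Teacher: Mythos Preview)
Your proposal is correct and follows essentially the same route as the paper: take the $\calB_k$'s to be random-guessing, obtain perfect security from the $\lfloor(n-1)/3\rfloor$ threshold plus the error-correction property of the secret sharing, and then run the same (a)/(b) case analysis as in Theorem~\ref{thm:ciss}, arriving at the bound
\[
U_j(\calA_j,\calB_{-j},\calA_\lambda)\le (n+1)^2\cdot 2^{-(\ell+1)}\cdot u_1^* + \bigl(1-(n+1)^2\cdot 2^{-(\ell+1)}\bigr)\cdot u_4^*\le u_2^*,
\]
with $(u_1^*,u_2^*,u_4^*)\in\{(u_1',u_2',u_4'),(u_1'',u_2'',u_4'')\}$ according to whether $\detect_\lambda=0$ or $1$. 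Your explicit split on $\detect_\lambda$ is in fact slightly more careful than the paper, which just records that the triple is one of the two without elaborating.

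One remark: you are making case~(b) harder than it is. No pairwise-independence argument is needed there. Since $\calA_j$ and $\calA_\lambda$ together corrupt at most $t_j+t^*\le\lfloor(n-1)/2\rfloor$ channels, a strict majority of indices $i'$ are untouched by either of them; for any such $i'$ and any index $j'$ controlled by a non-tampering $\calB_{k'}$, all four quantities $\tilde h_{i'},\tilde s_{j'},\tilde r_{i',j'},\tilde T_{i',j'}$ equal their original values, so $j'\notin L_{i'}$ \emph{deterministically}. Hence $j'$ cannot appear in a majority of lists, and $\detect_{k'}=0$. The paper disposes of this case in one line by the same counting observation.
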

\begin{proof}
  For $k \in \{1, \dots, \lambda-1\}$, let $\calB_{k}$ be a random guessing adversary.
Let $\calA_\lambda$ be any $t^*$-adversary.
Note that the information of $s_i$ can be obtained only by seeing $m_i$ since each $h_j(s_i)$ is masked by $r_{j,i}$, which is included only in $m_i$.
Since each $s_i$ is a share of the secret sharing of threshold $\lfloor (n-1)/3 \rfloor$,
each adversary $\calB_k$ and $\calA_\lambda$ can learn nothing about the original message.
Although at most $t^*$ messages may be corrupted by $\calA_\lambda$, it follows from the property of the underlying secret sharing
that the message can be correctly recovered in the presence of $t^* \leq \lfloor (n-1)/3 \rfloor$ corruptions out of $n$ shares.
Thus, the protocol is perfectly secure against $(\calB_1, \dots, \calB_{\lambda-1},\calA_\lambda)$.

Next, we show that $(\calB_1, \dots, \calB_{\lambda-1},\calA_\lambda)$ is a Nash equilibrium for any $\calA_\lambda$.
When the strategy profile $(\calB_1, \dots, \calB_{\lambda-1},\calA_\lambda)$ is employed, we have $\suc = 1$.
To increase the utility of adversary $k$, 
$\calA_k$ needs to get either (a) $\suc = 0$, or
  (b) $\detect_k=0$, and $\detect_{k'}=1$ for some $k' \neq k$.

For the case of (a), $\calA_k$ tries to change $s_i$ into $\tilde{s}_i \neq s_i$ for some $i \in \{1, \dots, n\}$.
When playing with $(\calA_k, \calB_{-k},\calA_\lambda)$, the number of corrupted channels is at most $t_k + t^* \leq \lfloor (n-1)/2 \rfloor$.
Hence, there are a majority of indices $i'$ that is not corrupted by $\calA_k$ or $\calA_\lambda$,
and for each $i'$, the tampering on the $i$th channel will be detected; namely, the list $L_{i'}$ will include $i$ with high probability.
By the same argument as in the proof of Theorem~\ref{thm:ciss}, any tampering of $\tilde{s}_i \neq s_i$ by $\calA_k$ and $\calA_\lambda$
is detected with probability at least $1 - (n+1)^2\cdot 2^{-(\ell+1)}$.
Thus, we have that
\begin{align*}
  U_k(\calA_k, \calB_{-k},\calA_\lambda) & \leq (n+1)^2\cdot 2^{-(\ell+1)} \cdot u_1^* + \left( 1 - (n+1)^2\cdot 2^{-(\ell+1)}\right) \cdot u_4^* 
   \leq u_2^*,
  \end{align*}
where $(u_1^*,u_2^*, u_4^*)$  is either $(u_1',u_2',u_4')$ or $(u_1'',u_2'',u_4'')$.
The last inequality follows from the assumption.

For the case of (b), $\calA_k$ needs the result that $j \in L_i$ for a majority of the list $L_i$'s, where the $j$th channel is corrupted by adversary $k'$.
However, 
since $\calA_k$ and $\calA_\lambda$ can corrupt a minority of the channels,
this event cannot happen.

Thus, we have shown that $(\calB_1, \dots, \calB_{\lambda-1})$ is a robust Nash equilibrium. 
\end{proof}

\section{Conclusions}\label{sec:conclusion}

We have introduced game-theoretic security models in SMT and constructed perfect SMT protocols against rational timid adversaries.
Several protocols could circumvent the known impossibility results in the traditional cryptographic model.
We have also constructed perfect SMT protocols when multiple rational adversaries corrupt all the channels.
The results have revealed that we may not need to guarantee that
adversaries do not corrupt one resource/channel if they may not cooperate.
A feature of our model  is that the best strategy for adversaries is to behave harmlessly.
Namely, adversaries rationally decide to do nothing for the protocols.
Although this conclusion seems similar to the honest-but-curious adversary model,
the difference is significant between the situations in which adversaries can potentially attack actively or not.

One of future work is to apply our game-theoretic models to other primitives and protocols.
The model of timid adversaries might be useful for constructing more efficient and resilient protocols.
It can be used to avoid the impossibility results in the traditional setting.
Since rational adversaries in our models do not attack actively,
it seems easier to construct protocols by composition.
Another direction is  to study the mixed model of malicious and rational adversaries.
Since real-life situations may fall into this setting, it is beneficial to construct more efficient
protocols than in the usual cryptographic setting.

\section*{Acknowledgments}
This work was supported in part by JSPS Grants-in-Aid for Scientific Research Numbers 16H01705, 17H01695, 18K11159, 19K22849, and 21H04879.
We thank Masahito Hayashi for discussions about the relations between SMT and secure network coding.
  
\bibliographystyle{abbrv}
\bibliography{mybib}

\begin{thebibliography}{10}

\bibitem{ADGH06}
I.~Abraham, D.~Dolev, R.~Gonen, and J.~Y. Halpern.
\newblock Distributed computing meets game theory: robust mechanisms for
  rational secret sharing and multiparty computation.
\newblock In E.~Ruppert and D.~Malkhi, editors, {\em Proceedings of the
  Twenty-Fifth Annual {ACM} Symposium on Principles of Distributed Computing,
  {PODC} 2006}, pages 53--62. {ACM}, 2006.

\bibitem{ADH19}
I.~Abraham, D.~Dolev, and J.~Y. Halpern.
\newblock Distributed protocols for leader election: {A} game-theoretic
  perspective.
\newblock {\em {ACM} Trans. Economics and Comput.}, 7(1):4:1--4:26, 2019.

\bibitem{ACdH06}
S.~Agarwal, R.~Cramer, and R.~de~Haan.
\newblock Asymptotically optimal two-round perfectly secure message
  transmission.
\newblock In C.~Dwork, editor, {\em Advances in Cryptology - {CRYPTO} 2006,
  26th Annual International Cryptology Conference, Santa Barbara, California,
  USA, August 20-24, 2006, Proceedings}, volume 4117 of {\em Lecture Notes in
  Computer Science}, pages 394--408. Springer, 2006.

\bibitem{ACH16}
G.~Asharov, R.~Canetti, and C.~Hazay.
\newblock Toward a game theoretic view of secure computation.
\newblock {\em J. Cryptology}, 29(4):879--926, 2016.

\bibitem{AL11}
G.~Asharov and Y.~Lindell.
\newblock Utility dependence in correct and fair rational secret sharing.
\newblock {\em J. Cryptology}, 24(1):157--202, 2011.

\bibitem{AL10}
Y.~Aumann and Y.~Lindell.
\newblock Security against covert adversaries: Efficient protocols for
  realistic adversaries.
\newblock {\em J. Cryptology}, 23(2):281--343, 2010.

\bibitem{AM13}
P.~D. Azar and S.~Micali.
\newblock Super-efficient rational proofs.
\newblock In M.~J. Kearns, R.~P. McAfee, and {\'{E}}.~Tardos, editors, {\em
  Proceedings of the fourteenth {ACM} Conference on Electronic Commerce, {EC}
  2013}, pages 29--30. {ACM}, 2013.

\bibitem{Bla79}
G.~R. Blakley.
\newblock Safeguarding cryptographic keys.
\newblock {\em Proc. of the National Computer Conference}, 48:313--317, 1979.

\bibitem{CG15}
M.~Campanelli and R.~Gennaro.
\newblock Sequentially composable rational proofs.
\newblock In M.~H.~R. Khouzani, E.~A. Panaousis, and G.~Theodorakopoulos,
  editors, {\em Decision and Game Theory for Security - 6th International
  Conference, GameSec 2015}, volume 9406 of {\em Lecture Notes in Computer
  Science}, pages 270--288. Springer, 2015.

\bibitem{CG17}
M.~Campanelli and R.~Gennaro.
\newblock Efficient rational proofs for space bounded computations.
\newblock In S.~Rass, B.~An, C.~Kiekintveld, F.~Fang, and S.~Schauer, editors,
  {\em Decision and Game Theory for Security - 8th International Conference,
  GameSec 2017}, volume 10575 of {\em Lecture Notes in Computer Science}, pages
  53--73. Springer, 2017.

\bibitem{CDFPW08}
R.~Cramer, Y.~Dodis, S.~Fehr, C.~Padr{\'{o}}, and D.~Wichs.
\newblock Detection of algebraic manipulation with applications to robust
  secret sharing and fuzzy extractors.
\newblock In N.~P. Smart, editor, {\em {EUROCRYPT}}, volume 4965 of {\em
  Lecture Notes in Computer Science}, pages 471--488. Springer, 2008.

\bibitem{DDWY93}
D.~Dolev, C.~Dwork, O.~Waarts, and M.~Yung.
\newblock Perfectly secure message transmission.
\newblock {\em J. {ACM}}, 40(1):17--47, 1993.

\bibitem{FW00}
M.~K. Franklin and R.~N. Wright.
\newblock Secure communication in minimal connectivity models.
\newblock {\em J. Cryptology}, 13(1):9--30, 2000.

\bibitem{FY92}
M.~K. Franklin and M.~Yung.
\newblock Communication complexity of secure computation (extended abstract).
\newblock In S.~R. Kosaraju, M.~Fellows, A.~Wigderson, and J.~A. Ellis,
  editors, {\em Proceedings of the 24th Annual {ACM} Symposium on Theory of
  Computing, {STOC '92}}, pages 699--710. {ACM}, 1992.

\bibitem{FKN10}
G.~Fuchsbauer, J.~Katz, and D.~Naccache.
\newblock Efficient rational secret sharing in standard communication networks.
\newblock In D.~Micciancio, editor, {\em Theory of Cryptography, 7th Theory of
  Cryptography Conference, {TCC} 2010}, volume 5978 of {\em Lecture Notes in
  Computer Science}, pages 419--436. Springer, 2010.

\bibitem{FYK18}
M.~Fujita, K.~Yasunaga, and T.~Koshiba.
\newblock Perfectly secure message transmission against rational timid
  adversaries.
\newblock In L.~Bushnell, R.~Poovendran, and T.~Basar, editors, {\em Decision
  and Game Theory for Security - 9th International Conference, GameSec 2018},
  volume 11199 of {\em Lecture Notes in Computer Science}, pages 127--144.
  Springer, 2018.

\bibitem{GGO11}
J.~A. Garay, C.~Givens, and R.~Ostrovsky.
\newblock Secure message transmission by public discussion: {A} brief survey.
\newblock In Y.~M. Chee, Z.~Guo, S.~Ling, F.~Shao, Y.~Tang, H.~Wang, and
  C.~Xing, editors, {\em Coding and Cryptology - Third International Workshop,
  {IWCC} 2011}, volume 6639 of {\em Lecture Notes in Computer Science}, pages
  126--141. Springer, 2011.

\bibitem{GGO14}
J.~A. Garay, C.~Givens, and R.~Ostrovsky.
\newblock Secure message transmission with small public discussion.
\newblock {\em {IEEE} Trans. Inf. Theory}, 60(4):2373--2390, 2014.

\bibitem{GKMTZ13}
J.~A. Garay, J.~Katz, U.~Maurer, B.~Tackmann, and V.~Zikas.
\newblock Rational protocol design: Cryptography against incentive-driven
  adversaries.
\newblock In {\em 54th Annual {IEEE} Symposium on Foundations of Computer
  Science, {FOCS} 2013}, pages 648--657. {IEEE} Computer Society, 2013.

\bibitem{GKTZ15}
J.~A. Garay, J.~Katz, B.~Tackmann, and V.~Zikas.
\newblock How fair is your protocol?: {A} utility-based approach to protocol
  optimality.
\newblock In C.~Georgiou and P.~G. Spirakis, editors, {\em Proceedings of the
  2015 {ACM} Symposium on Principles of Distributed Computing, {PODC} 2015},
  pages 281--290. {ACM}, 2015.

\bibitem{GO08}
J.~A. Garay and R.~Ostrovsky.
\newblock Almost-everywhere secure computation.
\newblock In N.~P. Smart, editor, {\em Advances in Cryptology - {EUROCRYPT}
  2008, 27th Annual International Conference on the Theory and Applications of
  Cryptographic Techniques}, volume 4965 of {\em Lecture Notes in Computer
  Science}, pages 307--323. Springer, 2008.

\bibitem{GK06}
S.~D. Gordon and J.~Katz.
\newblock Rational secret sharing, revisited.
\newblock In R.~D. Prisco and M.~Yung, editors, {\em Security and Cryptography
  for Networks, 5th International Conference, {SCN} 2006}, volume 4116 of {\em
  Lecture Notes in Computer Science}, pages 229--241. Springer, 2006.

\bibitem{Gra10}
R.~Gradwohl.
\newblock Rationality in the full-information model.
\newblock In D.~Micciancio, editor, {\em Theory of Cryptography, 7th Theory of
  Cryptography Conference, {TCC} 2010}, volume 5978 of {\em Lecture Notes in
  Computer Science}, pages 401--418. Springer, 2010.

\bibitem{GK12}
A.~Groce and J.~Katz.
\newblock Fair computation with rational players.
\newblock In D.~Pointcheval and T.~Johansson, editors, {\em Advances in
  Cryptology - {EUROCRYPT} 2012 - 31st Annual International Conference on the
  Theory and Applications of Cryptographic Techniques}, volume 7237 of {\em
  Lecture Notes in Computer Science}, pages 81--98. Springer, 2012.

\bibitem{GKTZ12}
A.~Groce, J.~Katz, A.~Thiruvengadam, and V.~Zikas.
\newblock Byzantine agreement with a rational adversary.
\newblock In A.~Czumaj, K.~Mehlhorn, A.~M. Pitts, and R.~Wattenhofer, editors,
  {\em Automata, Languages, and Programming - 39th International Colloquium,
  {ICALP} 2012}, volume 7392 of {\em Lecture Notes in Computer Science}, pages
  561--572. Springer, 2012.

\bibitem{GHRV14}
S.~Guo, P.~Hub{\'{a}}cek, A.~Rosen, and M.~Vald.
\newblock Rational arguments: single round delegation with sublinear
  verification.
\newblock In M.~Naor, editor, {\em Innovations in Theoretical Computer Science,
  ITCS'14}, pages 523--540. {ACM}, 2014.

\bibitem{GHRV16}
S.~Guo, P.~Hub{\'{a}}cek, A.~Rosen, and M.~Vald.
\newblock Rational sumchecks.
\newblock In E.~Kushilevitz and T.~Malkin, editors, {\em Theory of Cryptography
  - 13th International Conference, {TCC} 2016-A}, volume 9563 of {\em Lecture
  Notes in Computer Science}, pages 319--351. Springer, 2016.

\bibitem{HT04}
J.~Y. Halpern and V.~Teague.
\newblock Rational secret sharing and multiparty computation: extended
  abstract.
\newblock In L.~Babai, editor, {\em Proceedings of the 36th Annual {ACM}
  Symposium on Theory of Computing, STOC 2004}, pages 623--632. {ACM}, 2004.

\bibitem{HV16}
J.~Y. Halpern and X.~Vila{\c{c}}a.
\newblock Rational consensus: Extended abstract.
\newblock In G.~Giakkoupis, editor, {\em Proceedings of the 2016 {ACM}
  Symposium on Principles of Distributed Computing, {PODC} 2016}, pages
  137--146. {ACM}, 2016.

\bibitem{HK18}
M.~Hayashi and T.~Koshiba.
\newblock Universal construction of cheater-identifiable secret sharing against
  rushing cheaters based on message authentication.
\newblock In {\em 2018 {IEEE} International Symposium on Information Theory,
  {ISIT} 2018}, pages 2614--2618. {IEEE}, 2018.

\bibitem{IY17}
K.~Inasawa and K.~Yasunaga.
\newblock Rational proofs against rational verifiers.
\newblock {\em {IEICE} Trans. Fundam. Electron. Commun. Comput. Sci.},
  100-A(11):2392--2397, 2017.

\bibitem{IOS12}
Y.~Ishai, R.~Ostrovsky, and H.~Seyalioglu.
\newblock Identifying cheaters without an honest majority.
\newblock In R.~Cramer, editor, {\em Theory of Cryptography - 9th Theory of
  Cryptography Conference, {TCC} 2012}, volume 7194 of {\em Lecture Notes in
  Computer Science}, pages 21--38. Springer, 2012.

\bibitem{KOTY17}
A.~Kawachi, Y.~Okamoto, K.~Tanaka, and K.~Yasunaga.
\newblock General constructions of rational secret sharing with expected
  constant-round reconstruction.
\newblock {\em Comput. J.}, 60(5):711--728, 2017.

\bibitem{KN08a}
G.~Kol and M.~Naor.
\newblock Cryptography and game theory: Designing protocols for exchanging
  information.
\newblock In R.~Canetti, editor, {\em Theory of Cryptography, Fifth Theory of
  Cryptography Conference, {TCC} 2008}, volume 4948 of {\em Lecture Notes in
  Computer Science}, pages 320--339. Springer, 2008.

\bibitem{KN08b}
G.~Kol and M.~Naor.
\newblock Games for exchanging information.
\newblock In C.~Dwork, editor, {\em Proceedings of the 40th Annual {ACM}
  Symposium on Theory of Computing, STOC 2008}, pages 423--432. {ACM}, 2008.

\bibitem{KS09}
K.~Kurosawa and K.~Suzuki.
\newblock Truly efficient 2-round perfectly secure message transmission scheme.
\newblock {\em {IEEE} Trans. Information Theory}, 55(11):5223--5232, 2009.

\bibitem{MS81}
R.~J. McEliece and D.~V. Sarwate.
\newblock On sharing secrets and reed-solomon codes.
\newblock {\em Commun. {ACM}}, 24(9):583--584, 1981.

\bibitem{Sha79}
A.~Shamir.
\newblock How to share a secret.
\newblock {\em Commun. {ACM}}, 22(11):612--613, 1979.

\bibitem{SJST11}
H.~Shi, S.~Jiang, R.~Safavi{-}Naini, and M.~A. Tuhin.
\newblock On optimal secure message transmission by public discussion.
\newblock {\em {IEEE} Trans. Information Theory}, 57(1):572--585, 2011.

\bibitem{SZ16}
G.~Spini and G.~Z{\'{e}}mor.
\newblock Perfectly secure message transmission in two rounds.
\newblock In M.~Hirt and A.~D. Smith, editors, {\em Theory of Cryptography -
  14th International Conference, {TCC} 2016-B}, volume 9985 of {\em Lecture
  Notes in Computer Science}, pages 286--304, 2016.

\bibitem{SNR04}
K.~Srinathan, A.~Narayanan, and C.~P. Rangan.
\newblock Optimal perfectly secure message transmission.
\newblock In M.~K. Franklin, editor, {\em Advances in Cryptology - {CRYPTO}
  2004}, volume 3152 of {\em Lecture Notes in Computer Science}, pages
  545--561. Springer, 2004.

\bibitem{WC81}
M.~N. Wegman and L.~Carter.
\newblock New hash functions and their use in authentication and set equality.
\newblock {\em J. Comput. Syst. Sci.}, 22(3):265--279, 1981.

\bibitem{YSJL14}
H.~Yao, D.~Silva, S.~Jaggi, and M.~Langberg.
\newblock Network codes resilient to jamming and eavesdropping.
\newblock {\em {IEEE/ACM} Trans. Netw.}, 22(6):1978--1987, 2014.

\bibitem{Yas16}
K.~Yasunaga.
\newblock Public-key encryption with lazy parties.
\newblock {\em {IEICE} Trans. Fundam. Electron. Commun. Comput. Sci.},
  99-A(2):590--600, 2016.

\bibitem{YK19}
K.~Yasunaga and T.~Koshiba.
\newblock Perfectly secure message transmission against independent rational
  adversaries.
\newblock In T.~Alpcan, Y.~Vorobeychik, J.~S. Baras, and G.~D{\'{a}}n, editors,
  {\em Decision and Game Theory for Security - 10th International Conference,
  GameSec 2019}, volume 11836 of {\em Lecture Notes in Computer Science}, pages
  563--582. Springer, 2019.

\bibitem{YY18}
K.~Yasunaga and K.~Yuzawa.
\newblock Repeated games for generating randomness in encryption.
\newblock {\em {IEICE} Trans. Fundam. Electron. Commun. Comput. Sci.},
  101-A(4):697--703, 2018.

\end{thebibliography}

%\clearpage

%\appendices
\appendix

%\section{Building Blocks}

\section{Pairwise Independent Hash Functions}\label{sec:uh}

Wegman and Carter \cite{WC81} introduced the notion of pairwise independent (or strongly universal)
hash functions and gave its construction.

\begin{definition}
Suppose that  a class of hash functions 
$H = \{h \colon \{0,1\}^m \rightarrow \{0, 1\}^\ell\}$, where 
$m \ge \ell$, satisfies the following:
for any distinct $x_1,x_2\in \{0,1\}^m$ and $y_1,y_2\in\{0,1\}^\ell$,
\[ \Pr_{h\in H} [ h(x_1)=y_1 \land h(x_2)=y_2 ] \le \gamma. \]
Then $H$ is called \emph{$\gamma$-pairwise independent}.
In the above, the randomness comes from the uniform choice of $h$ over $H$.
\end{definition}

Here we mention a useful property of almost pairwise independent hash function, 
which guarantees the security of some SMT protocols.

\begin{lemma}[\cite{SJST11}]\label{lem:sjst}
Let $H = \{h \colon \{0,1\}^m \rightarrow \{0, 1\}^\ell\}$ be a $\gamma$-almost
pairwise independent hash function family. Then for any 
$(x_1,c_1)\ne (x_2,c_2)\in \{0,1\}^m\times \{0,1\}^\ell$, we have
\[ \Pr_{h\in H} [ c_1\oplus h(x_1) = c_2\oplus h(x_2) ] \le 2^{\ell}\gamma. \]
\end{lemma}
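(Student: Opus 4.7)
The plan is to prove the lemma by a straightforward case split on whether $x_1 = x_2$, followed by a union bound in the nontrivial case that invokes the $\gamma$-almost pairwise independence property of $H$.

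First I would dispose of the degenerate case $x_1 = x_2$. In this case the hypothesis $(x_1,c_1) \neq (x_2,c_2)$ forces $c_1 \neq c_2$, so the event $c_1 \oplus h(x_1) = c_2 \oplus h(x_2)$ reduces to $c_1 = c_2$, which never holds. Hence the probability is $0$, trivially bounded by $2^\ell \gamma$.

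Next I would handle the main case $x_1 \neq x_2$. Set $d = c_1 \oplus c_2 \in \{0,1\}^\ell$; the target event is equivalent to $h(x_1) \oplus h(x_2) = d$. I would rewrite this event as the disjoint union, over $y_1 \in \{0,1\}^\ell$, of the events $\{h(x_1) = y_1 \land h(x_2) = y_1 \oplus d\}$. There are exactly $2^\ell$ such pairs $(y_1, y_2) = (y_1, y_1 \oplus d)$, and for each of them the two coordinates are distinct in $\{0,1\}^\ell$ (this is automatic from $y_1 \neq y_1 \oplus d$ when $d \neq 0$, and when $d = 0$ the pair is $(y_1,y_1)$, which is still a pair the pairwise-independence bound applies to under the usual formulation; if the definition in the paper only covers distinct $y_1,y_2$, I would note that $d = 0$ combined with $x_1 \neq x_2$ still gives the bound by summing the $\gamma$-pairwise bound over $y_1$ via the marginal distribution of $h(x_1)$). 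By the $\gamma$-almost pairwise independence, each summand is at most $\gamma$, so a union bound gives
\[
\Pr_{h \in H}\bigl[h(x_1) \oplus h(x_2) = d\bigr] \;\leq\; \sum_{y_1 \in \{0,1\}^\ell} \Pr_{h \in H}\bigl[h(x_1) = y_1 \land h(x_2) = y_1 \oplus d\bigr] \;\leq\; 2^\ell \gamma,
\]
which is the desired bound.

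The proof is essentially a one-line union bound, so there is no real obstacle. The only subtlety I would be careful about is making sure the pairwise-independence definition in the paper is applied only to pairs $(y_1,y_2)$ for which it is stated, and verifying that the $2^\ell$ pairs are genuinely disjoint so that the union-bound inequality is tight enough to give the $2^\ell \gamma$ factor without any extra loss.
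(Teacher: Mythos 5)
Your proof is correct. The paper itself gives no proof of this lemma---it is imported from~\cite{SJST11} as a black box---so there is nothing internal to compare against, but your argument is exactly the standard one: the case $x_1=x_2$ is vacuous since $c_1\neq c_2$ forces the event to be empty, and for $x_1\neq x_2$ the event $h(x_1)\oplus h(x_2)=c_1\oplus c_2$ partitions into $2^\ell$ disjoint events of the form $\{h(x_1)=y_1 \land h(x_2)=y_1\oplus d\}$, each bounded by $\gamma$. The one subtlety you flag, whether the pairwise-independence bound is stated for possibly equal $y_1,y_2$, is settled by the paper's own conventions: the definition requires only $x_1,x_2$ to be distinct, and the subsequent discussion of the Wegman--Carter family explicitly quantifies over ``any $y_1,y_2$,'' so the case $d=0$ is covered directly and your fallback via the marginal of $h(x_1)$ is not needed.
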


In \cite{WC81}, Wegman and Carter constructed a family of 
$2^{1-2\ell}$-almost pairwise independent
hash functions.
In particular, their hash function family 
$H_{wc} = \{ h \colon \{0,1\}^m\rightarrow \{0,1\}^{\ell} \}$
satisfies that
\[ \Pr_{h\in H_{wc}} [ h(x_1)=y_1 \land h(x_2)=y_2 ] = 2^{1-2\ell} \]
for any distinct $x_1,x_2\in\{0,1\}^m$ and for any $y_1,y_2\in \{0,1\}^{\ell}$
and also
\begin{equation}
  \Pr_{h\in H_{wc}} [ c_1\oplus h(x_1) = c_2 \oplus h(x_2) ] = 2^{1-\ell}\label{eq:wc}
\end{equation}
for any distinct pairs $(x_1,c_1)\ne (x_2,c_2)\in \{0,1\}^m\times \{0,1\}^{\ell}$.

\section{Proof of Theorem~\ref{thm:robustss}}\label{sec:proofrobustss}

To prove the theorem, we define the notion of \emph{algebraic manipulation detection (AMD) codes}
in which the security requirement is slightly different from that in~\cite{CDFPW08} for our purpose.

\begin{definition}\label{def:amd}
  An \emph{$(M, N, \delta)$-algebraic manipulation detection (AMD) code} is a probabilistic function $E \colon \mathcal{S} \to \mathcal{G}$,
  where $\mathcal{S}$ is a set of size $M$
  and $\mathcal{G}$ is an additive group of order $N$,
  together with a decoding function $D \colon \mathcal{G} \to \mathcal{S} \cup \{\bot\}$
  such that
  \begin{itemize}
  \item \emph{Correctness}: For any $s \in \mathcal{S}$, $\Pr[ D( E(s) ) = s] = 1$.
  \item \emph{Security}: For any $s \in \mathcal{S}$ and $\Delta \in \mathcal{G} \setminus \{0 \}$, $\Pr[D ( E(s) + \Delta) \neq \bot ] \leq \delta$.
  \end{itemize}
  
  An AMD code is called \emph{systematic} if $\mathcal{S}$ is a group, and the encoding is of the form
  \begin{equation*}
    E \colon \mathcal{S} \to \mathcal{S} \times \mathcal{G}_1 \times \mathcal{G}_2, s \mapsto (s, x, f(x,s))
  \end{equation*}
  for some function $f$ and random $x \in \mathcal{G}_1$.
  The decoding function $D$ of a systematic AMD code is given by $D( s', x', f') = s'$ if $f' = f(x',s')$, and $\bot$ otherwise.
\end{definition}
Note that, for a systematic AMD code, the correctness immediately follows from the definition of the decoding function.
The security requirement can be stated such that
for any $s \in \mathcal{S}$ and $(\Delta_s, \Delta_x, \Delta_f) \in \mathcal{S} \times \mathcal{G}_1 \times \mathcal{G}_2 \setminus \{ (0,0,0)\}$,
$\Pr_x[ f(s + \Delta_s, x + \Delta_x) = f(s, x) + \Delta_f] \leq \delta$.

We show that a systematic AMD code given in~\cite{CDFPW08} satisfies the above definition.
\begin{proposition}\label{prop:amd}
  Let $\F$ be a finite field of size $q$ and characteristic $p$, and $d$ any integer such that $d+2$ is not divisible by $p$.
  Define the encoding function $E \colon \F^d \to \F^d \times \F \times \F$ by $E(s) = (s, x, f(x,s))$ where 
  \[ f(x,s) = x^{d+2} + \sum_{i=1}^{d} s_i x^i\]
  and $s = (s_1, \dots, s_d)$.
  Then, the construction is a systematic $(q^d, q^{d+2}, (d+1)/q)$-AMD code.
\end{proposition}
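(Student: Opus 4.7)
The plan is to verify the security condition for the systematic AMD code by analyzing a residual polynomial in the random element $x$. The correctness is immediate from the syntactic definition of decoding in a systematic AMD code, so the entire content lies in bounding, for every fixed $s \in \F^d$ and every nonzero triple $(\Delta_s,\Delta_x,\Delta_f) \in \F^d \times \F \times \F$, the probability
\[ \Pr_{x \in \F}\left[ f(x + \Delta_x,\, s + \Delta_s) = f(x,s) + \Delta_f \right]. \]

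First I would define the auxiliary polynomial
\[ g(x) = f(x+\Delta_x,\, s+\Delta_s) - f(x,s) - \Delta_f \in \F[x], \]
and compute the probability as the fraction of $x \in \F$ with $g(x) = 0$. Expanding $f(x+\Delta_x,s+\Delta_s) = (x+\Delta_x)^{d+2} + \sum_{i=1}^{d}(s_i+\Delta_{s_i})(x+\Delta_x)^i$ and subtracting $f(x,s) = x^{d+2} + \sum_{i=1}^{d} s_i x^i$, the $x^{d+2}$ terms cancel, and the coefficient of $x^{d+1}$ coming from the binomial expansion of $(x+\Delta_x)^{d+2}$ is $(d+2)\Delta_x$. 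The hypothesis that $p \nmid d+2$ means $(d+2) \neq 0$ in $\F$, which will be the critical use of the characteristic assumption.

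Next I would split into cases. If $\Delta_x \neq 0$, then $(d+2)\Delta_x \neq 0$, so $\deg g = d+1$ and hence $g$ has at most $d+1$ roots in $\F$, giving a probability bound of $(d+1)/q$. If $\Delta_x = 0$, then $g(x) = \sum_{i=1}^{d}\Delta_{s_i} x^i - \Delta_f$. When $\Delta_s \neq 0$, $g$ is a nonzero polynomial of degree at most $d$, giving probability at most $d/q \leq (d+1)/q$. When $\Delta_s = 0$, nontriviality forces $\Delta_f \neq 0$, so $g$ is a nonzero constant and the probability is $0$. In every case the bound $(d+1)/q$ holds, establishing the security condition and hence the $(q^d, q^{d+2}, (d+1)/q)$-AMD property.

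I do not foresee a serious obstacle: the argument is essentially a one-line polynomial root count once the right residual polynomial is written down. The only subtle point is to remember that the three perturbations can independently be zero or nonzero and to verify that whenever $\Delta_x \neq 0$ the $x^{d+1}$ coefficient does not accidentally vanish — this is precisely where the assumption $p \nmid (d+2)$ enters and is the reason it appears in the statement.
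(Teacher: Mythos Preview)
Your proposal is correct and essentially identical to the paper's proof: you define the residual polynomial, identify the $x^{d+1}$ coefficient as $(d+2)\Delta_x$, and perform the same three-way case split on $(\Delta_x,\Delta_s)$ to bound the number of roots by $d+1$. The only difference is cosmetic---you name the polynomial $g$ explicitly---so there is nothing to add.
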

\begin{proof}
  We show that for any $s \in \F^{d}$ and $(\Delta_s, \Delta_x, \Delta_f) \in \F^d \times \F \times \F \setminus \{(0^d,0,0)\}$,
  $\Pr[ f(s+\Delta_s, x+ \Delta_x) = f(s,x) + \Delta_f] \leq \delta$.
  The event in the probability is that
  \begin{equation}
    (x+\Delta_x)^{d+2} + \sum_{i=1}^d s_i'(x+\Delta_x)^i = x^{d+2} + \sum_{i=1}^d s_ix^i + \Delta_f,\label{eq:event}
  \end{equation}
  where $s_i'$ is the $i$th element of $s+\Delta_s$.
  The left-hand side of (\ref{eq:event}) can be represented by
  \begin{equation*}
    x^{d+2} + (d+2)\Delta_x x^{d+1} + \sum_{i=1}^d s_i' x^i + \Delta_x p(x)
  \end{equation*}
  for some polynomial $p(x)$ of degree at most $d$.
  Thus, (\ref{eq:event}) can be rewritten as
  \begin{equation}
    (d+2)\Delta_x x^{d+1} + \sum_{i=1}^d (s_i' - s_i) x^i + \Delta_x p(x) - \Delta_f = 0. \label{eq:event2}
  \end{equation}
  
  We discuss the probability that (\ref{eq:event2}) happens when $x$ is chosen uniformly at random.  We consider the following cases:
  \begin{enumerate}
  \item When $\Delta_x \neq 0$, the coefficient of $x^{d+1}$ is $(d+2) \Delta_x$, which is not zero by the assumption that $d+2$ is not divisible by $p$.
    Then, (\ref{eq:event2}) has at most $d+1$ solutions $x$. Hence the event happens with probability at most $(d+1)/q$.
  \item When $\Delta_x = 0$, we consider two subcases:
    \begin{enumerate}
    \item If $\Delta_s \neq 0$, then $s_i' - s_i \neq 0$ for some $i$. Hence (\ref{eq:event2}) has at most $d$ solutions $x$. Thus the event happens with probability at most $d/p$.
    \item If $\Delta_s = 0$, (\ref{eq:event2}) becomes $\Delta_f = 0$. Since $\Delta_f \neq 0$ for this case, the event cannot happen.
    \end{enumerate}
  \end{enumerate}
  In every case, the event happens with probability at most $(d+1)/q$. Thus the statement follows.
\end{proof}

As discussed in~\cite{CDFPW08}, a robust secret sharing scheme can be obtained by
combining an AMD code and a linear secret sharing scheme.
Let $(\Share, \Reconst)$ be a $(t,n)$-secret sharing scheme with range $\mathcal{G}$ that
satisfies correctness and perfect privacy of Definition~\ref{def:robustss},
where we drop the parameter $\delta$ for robustness.
A \emph{linear} secret sharing scheme has the property that for any $s \in \mathcal{G}$, $(s_1, \dots, s_n) \in \Share(s)$,
and vector $(s_1', \dots, s_n')$, which may contain $\bot$ symbols,
it holds that $\Reconst( \{i, s_i + s_i'\}_{i \in I}) = s + \Reconst( \{i, s_i'\}_{i \in I})$
for any $I \subseteq \{1, \dots, n\}$  with $|I| > t$,
where $\bot + x = x + \bot = \bot$ for all $x$.
Examples of linear secret sharing schemes are Shamir's scheme~\cite{Sha79} and the simple XOR-based $(n-1,n)$-scheme,
in which secret $s \in \bin^n$ is shared by $(s_1, \dots, s_n)$ for random $s_i\in\bin^n$ with the restriction that $s_1 \oplus \dots \oplus s_n = s$.

We show that the same construction as in~\cite{CDFPW08} works as a construction of robust secret sharing as per Definition~\ref{def:robustss}.
\begin{proposition}\label{prop:robustss}
  Let $(\Share, \Reconst)$ be a linear $(t,n)$-secret sharing scheme with range $\mathcal{G}$ that
  satisfies correctness and perfect privacy as per Definition~\ref{def:robustss},
  and let $(E,D)$ be an $(M,N,\delta)$-AMD code as per Definition~\ref{def:amd} with $|\mathcal{G}| = N$.
  Then, the scheme $(\Share', \Reconst')$ defined by $\Share'(s) = \Share(E(s))$ and $\Reconst'(S) = D(\Reconst(S))$ is
  a $(t,n,\delta)$-robust secret sharing scheme. 
\end{proposition}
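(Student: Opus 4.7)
The plan is to verify the three conditions of Definition~\ref{def:robustss}---correctness, perfect privacy, and robustness---by peeling off the AMD layer and reducing each property to the corresponding property of the underlying secret sharing or AMD code. Correctness and privacy will be almost immediate; the whole technical content sits in the robustness proof, which pivots on the linearity of $\Reconst$ to translate per-share tampering into an additive shift of the AMD codeword.

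For correctness, given $|I| > t$ and honest shares $(s_1,\dots,s_n) \gets \Share(E(s))$, the correctness of $(\Share,\Reconst)$ gives $\Reconst(\{i,s_i\}_{i\in I}) = E(s)$, and then AMD correctness yields $D(E(s)) = s$, so $\Reconst'$ outputs $s$ with probability $1$. For perfect privacy, fix $|I|\le t$ and any two secrets $s,s' \in \mathcal{S}$. For any fixed $g,g' \in \mathcal{G}$, SS perfect privacy gives $\Delta(\{s_i\mid E(s)=g\}_{i\in I}, \{s_i'\mid E(s')=g'\}_{i\in I}) = 0$; marginalising over the (independent) randomness of $E$ on both sides preserves equality, so the views of the adversary on inputs $s$ and $s'$ are identically distributed.

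The core step is robustness. Suppose the adversary tampers with shares $I$ with $|I|\le t$, producing $\tilde s_i$ for $i\in I$ and $\tilde s_i = s_i$ for $i\notin I$, with $\tilde s_i \neq s_i$ for at least one index. Set $\Delta_i = \tilde s_i - s_i$, so $\Delta_i = 0$ outside $I$. Applying linearity of $\Reconst$ to all $n$ shares,
\begin{equation*}
\Reconst(\{i,\tilde s_i\}_{i=1}^n) \;=\; \Reconst(\{i,s_i\}_{i=1}^n) + \Reconst(\{i,\Delta_i\}_{i=1}^n) \;=\; E(s) + \Delta,
\end{equation*}
where $\Delta := \Reconst(\{i,\Delta_i\}_{i=1}^n)$. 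The key observation is that the adversary chooses $\{\Delta_i\}$ as a function of $s$ and $\{s_i\}_{i\in I}$, and by SS perfect privacy the latter view is distributed identically for every fixed value of $E(s)$. Consequently $\Delta$ is independent of the AMD randomness used to produce $E(s)$. Applying AMD security to this $\Delta$ then yields
\begin{equation*}
\Pr\bigl[D(E(s)+\Delta) \neq \bot\bigr] \;\le\; \delta
\end{equation*}
whenever $\Delta\neq 0$, which is exactly the bound we want.

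The main obstacle I expect is the case $\Delta = 0$ with non-trivial $\{\Delta_i\}$: then $D(E(s)+\Delta) = s \neq \bot$, which seems at odds with the stated bound (recall the footnote highlights that Definition~\ref{def:robustss} is \emph{stronger} than the notion in~\cite{CDFPW08}, which permits outputs equal to the original secret). Handling this cleanly requires the ambient linear SS to have the property that every non-trivial tampering vector $(\Delta_1,\dots,\Delta_n)$ with at most $t$ non-zero entries satisfies $\Reconst(\{i,\Delta_i\}_{i=1}^n)\neq 0$---equivalently, that $\Reconst$ on all $n$ shares incorporates a consistency check that rejects inputs not in the image of $\Share(0)$. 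For Shamir's scheme this is automatic, since a degree-$\le t$ polynomial with $n-t>t$ zeros is identically zero, forcing inconsistency and hence $\bot$. I would therefore argue that the assumed linear SS delivers $\bot$ on inconsistent share vectors, split the analysis into ``$\Delta_i\equiv 0$'' (vacuous, no tampering), ``inconsistent $\{\Delta_i\}$'' (handled by the SS), and ``consistent non-zero $\{\Delta_i\}$ with $\Delta\neq 0$'' (handled by AMD), and take the maximum of the resulting failure probabilities, which is $\delta$.
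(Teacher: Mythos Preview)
Your reduction---use linearity to write the tampered output as $D(E(s)+\Delta)$ with $\Delta=\Reconst(\{i,\Delta_i\})$, argue via perfect privacy that $\Delta$ is independent of the AMD randomness, then invoke AMD security---is exactly the paper's proof. The paper is terser: it skips correctness and privacy entirely, and it never isolates the $\Delta=0$ case you flag, simply asserting ``if $\tilde s_i\neq s_i$ for some $i$, the probability is at most $\delta$ by the security of the AMD code.''

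Your attempt to close that case has a limitation. The Shamir argument (``a degree-$\le t$ polynomial with $n-t>t$ zeros is identically zero'') requires $t<n/2$, whereas the proposition and Theorem~\ref{thm:robustss} are stated for all $t<n$. When $t\ge(n+1)/2$ the adversary can pick a non-zero polynomial $f'$ of degree $\le t$ with $f'(0)=0$ and $f'(a_i)=0$ for all $i\notin I$, set $\Delta_i=f'(a_i)$, and obtain a \emph{consistent} non-trivial tampering with $\Delta=0$; then $\Reconst'=D(E(s))=s\neq\bot$ with probability $1$, contradicting the robustness clause of Definition~\ref{def:robustss} as written. This gap is present in the paper's proof as well---you have surfaced it rather than introduced it. For the downstream SMT application (Theorem~\ref{thm:sttimid}) the case is harmless, since tampering that still recovers the correct secret yields $\suc=1$, $\detect=0$ and hence utility $u_2$, no better than not tampering; but as a proof of the proposition in its stated generality, both arguments are incomplete at this point.
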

\begin{proof}
  Let $(s_1, \dots, s_n) \in \Share'(s)$. Let $I \subseteq \{1,\dots,n\}$ with $|I| \leq t$,
  and $(\tilde{s}_1, \dots \tilde{s}_n)$ be a sequence of shares
  satisfying the requirement for input shares in the robustness condition of Definition~\ref{def:robustss}.
  We assume that $\tilde{s}_i = s_i + \Delta_i'$ for each $i \in \{1,\dots,n\}$. Note that $\Delta_i' = 0$ for every $i \notin I$.
  Then,
  \begin{align*}
    \Pr\left[ \Reconst'\left(\{i,\tilde{s}_i\}_{i \in \{1,\dots,n\}}\right) \neq \bot \right]
    & = \Pr\left[ D\left( E(s) + \Reconst(\{ i, \Delta_i\}_{i \in \{1,\dots,n\}}) \right) \neq \bot \right]\\
    & = \Pr\left[D\left( E(s) + \Delta \right) \neq \bot \right],
  \end{align*} 
  where $\Delta = \Reconst\left(\{ i, \Delta_i\}_{i \in \{1,\dots,n\}}\right)$ is determined by the adversary.
  It follows from perfect privacy of the secret sharing scheme that $\Delta$ is independent of $E(s)$.
  Thus, if $\tilde{s}_i \neq s_i$ for some $i \in \{1, \dots, n\}$, 
  the probability is at most $\delta$ by the security of the AMD code. Hence, the statement follows.
\end{proof}

By combining Shamir's secret sharing scheme with range $\F^d$ and the AMD code of Proposition~\ref{prop:amd},
the robust secret sharing scheme of Theorem~\ref{thm:robustss} is obtained by Proposition~\ref{prop:robustss}.

\end{document}